\newtheorem{theorem}{Theorem}[section]
\newtheorem{lemma}[theorem]{Lemma}
\newtheorem{proposition}[theorem]{Proposition}
\theoremstyle{definition}
\newtheorem{definition}[theorem]{Definition}
\theoremstyle{remark}
\newcommand{\sembracki}[1]{\left[\!\left[#1\right]\!\right]}
\newcommand{\xmark}{\ding{55}}
\title{A Logic-Based Analysis of Responsibility}
\author{Aldo Iv\'an Ram\'irez Abarca
\institute{Utrecht University\\ Utrecht, The Netherlands}
\email{nadabundo@gmail.com}}
\begin{document}
\maketitle

\begin{abstract}
This paper presents a logic-based framework to analyze responsibility, which I refer to as intentional epistemic act-utilitarian stit theory (IEAUST). To be precise, IEAUST is used to model and syntactically characterize various modes of responsibility, where by `modes of responsibility' I mean instances of Broersen's three categories of responsibility (causal, informational, and motivational responsibility), cast against the background of particular deontic contexts. IEAUST is obtained by integrating a modal language to express the following components of responsibility on stit models: agency, epistemic notions, intentionality, and different senses of obligation. With such a language, I characterize the components of responsibility using particular formulas. Then, adopting a compositional approach---where complex modalities are built out of more basic ones---these characterizations of the components are used to formalize the aforementioned modes of responsibility. 
\end{abstract}

\section{Introduction}

The study of responsibility is a complicated matter. The term is used in different ways in different fields, and it is easy to engage in everyday discussions as to why someone should be considered responsible for something. Typically, the backdrop of these discussions involves social, legal, moral, or philosophical problems, each with slightly different meanings for expressions like \emph{being responsible for...}, \emph{being held responsible for...}, or \emph{having the responsibility of...}, among others. Therefore---to approach such problems efficiently---there is a demand for clear,  taxonomical definitions of responsibility.   

For instance, suppose that you are a judge in Texas. You are presiding over a trial where the defendant is being charged with first-degree murder. The alleged crime is horrible, and the prosecution seeks capital punishment. The case is as follows: driving her car, the defendant ran over a traffic officer that was holding a stop-sign at a crossing walk, while school children were crossing the street. The traffic officer was killed, and some of the children were severely injured. A highly complicated case, the possibility of a death-penalty sentence means that the life of the defendant is at stake. More than ever, due process is imperative. As the presiding judge, you must abide by the prevailing definitions of criminal liability with precision. In other words, there is little to no room for ambiguity in the ruling, and your handling of the notions associated with responsibility in criminal law should be impeccable.  

As this example suggests, a framework with intelligible, realistically applicable definitions of responsibility is paramount in the field of law. However, responsibility-related problems arise across many other disciplines---social psychology, philosophy of emotion, legal theory, and ethics, to name a few \cite{lorini2014logical,weiner1995judgments}. A clear pattern in all these is the intent of issuing standards for when---and to what extent---an agent should be held responsible for a state of affairs. 

This is where Logic lends a hand. The development of expressive logics---to reason about agents' decisions in situations with moral consequences---involves devising unequivocal representations of components of behavior that are highly relevant to systematic responsibility attribution and to systematic blame-or-praise assignment. To put it plainly, expressive syntactic-and-semantic frameworks help us analyze responsibility-related problems in a methodical  way.\footnote{Most likely, this is why the logic-based formalization  of responsibility has become such an important topic in, for instance, normative multi-agent systems, responsible autonomous agents, and machine ethics for AI \cite{pereira2016programming,arrieta2020explainable}
}

The main goal of this paper is to present a proposal for a formal theory of responsibility. Such a proposal relies on (a) a \emph{decomposition} of responsibility into specific components and (b) a functional \emph{classification} of responsibility, where the different categories directly correlate with the components of the decomposition. As for the decomposition, it is given by the following list: 

\begin{itemize}
    \item[--]\label{itm:decom}\phantomsection\label{decom}\textbf{Agency}: the process by which agents bring about states of affairs in the environment. In other words, the phenomenon by which agents choose and perform actions, with accompanying mental states, that change the environment.
    \item[--]\textbf{Knowledge and}  \textbf{belief}:  mental states that concern the information available in the environment and that explain agents' particular choices of action.
    \item[--] \textbf{Intentions}: mental states that determine whether an action was done with the purpose of bringing about its effects. 
    \item[--] \textbf{Ought-to-do's}: the actions that agents should perform, complying to the codes of a normative system. Oughts-to-do's make up contexts that provide a criterion for deciding whether an agent should be blamed or praised. I refer to these contexts as the \emph{deontic contexts} of responsibility. 
\end{itemize} 

As for the classification, it is a refinement of Broersen's three categories of responsibility: \emph{causal, informational}, and \emph{motivational} responsibility \cite{aagotnes2006action, broersen2008complete, duijf2018let}. I will discuss these categories at length in Section~\ref{respo}. On the basis of both the decomposition and the classification, here I introduce a very rich stit logic to analyze responsibility, which I refer to as \emph{intentional epistemic act-utilitarian stit theory} (\emph{IEAUST}). More precisely, I use \emph{IEAUST} to model and syntactically characterize various modes of responsibility. By `modes of responsibility' I mean combinations of sub-categories of the three ones mentioned above, cast against the background of particular deontic contexts.  On the one hand, the sub-categories correspond to the different versions of responsibility that one can consider according to the \emph{active} and \emph{passive} forms of the notion: while the active form involves contributions---in terms of explicitly bringing about outcomes---the passive form involves omissions---which are interpreted as the processes by which agents allow that an outcome happens while being able, to some extent, to prevent it. On the other hand, the deontic context of a mode establishes whether and to what degree the combination of sub-categories involves either blameworthiness or praiseworthiness.

The logic \emph{IEAUST} includes a language that expresses agency, epistemic notions, intentionality, and different senses of obligation. With this language, I characterize the components of responsibility using particular formulas. Then, adopting a compositional approach---where complex modalities are built out of more basic ones---I use these characterizations of the components to formalize the aforementioned modes of responsibility. An outline of the paper is included below. 

\begin{itemize}
    \item Section~\ref{respo} presents an operational definition for responsibility and addresses the philosophical perspective adopted in my study of the notion.  
    
    \item Section~\ref{logic} introduces \emph{IEAUST} and uses this logic to provide stit-theoretic characterizations of different modes of responsibility.
    
    \item Section~\ref{axiomres} presents Hilbert-style proof systems both for \emph{IEAUST} and for a technical extension, addressing the status of their soundness \& completeness results.

\end{itemize}

\section{Categories of Responsibility}
\label{respo}

To make a start on formally analyzing responsibility, I identify (a) two \emph{viewpoints} for the philosophical study of responsibility, (b) three main \emph{categories} for the viewpoint that I focus on, and (c) two \emph{forms} in which the elements of the categories can be interpreted. 

As for point (a), the philosophical literature on responsibility usually distinguishes two \emph{viewpoints} on the notion \cite{van2011relation}: \emph{backward-looking responsibility} and \emph{forward-looking responsibility}. By backward-looking responsibility one refers to the viewpoint according to which an agent is considered to have produced a state of affairs that has already ensued and lies in the past. This is the viewpoint taken by a judge when, while trying a murder case, she wants to get to the bottom of things and find out who is responsible for doing the killing. In contrast, by forward-looking responsibility one refers to the viewpoint according to which which an agent is expected to comply with the duty of bringing about a state of affairs in the future. When one thinks of a student that has to write an essay before its due date, for instance, this is the view that is being used. In other words, the writing and the handing in of the essay before the deadline are seen as responsibilities of the student. 

From here on, I will focus on backward-looking responsibility. I work with the following operational definition: \phantomsection\label{elvin}\emph{responsibility} is a relation between the agents and the states of affairs of an environment, such that an agent is responsible for a state of affairs iff the agent's degree of involvement in the realization of that state of affairs warrants blame or praise (in light of a given normative system).   
As for point (b), I follow \cite{xstit} and \cite{duijf2018let} and distinguish three main \emph{categories} of responsibility, where each category can be correlated with the components of responsibility that it involves:\footnote{These categories extend the literature's common distinction between \emph{causal} and \emph{agentive} responsibility \cite{lorini2014logical, watson1996two,crisp2014aristotle}, and they were derived by \cite{xstit} on the basis of his analysis of the modes of \emph{mens rea}.}    

\begin{enumerate}
\item \textbf{\emph{Causal responsibility}}: an agent is causally responsible for a state of affairs iff the agent is the material author of such a state of affairs. The component that this category involves is agency. 
\item \label{itm:rcat2} \emph{\textbf{Informational responsibility}}: an agent is informationally responsible for a state of affairs iff the agent is the material author and it behaved knowingly, or consciously, while bringing about the state of affairs. The components that this category involves are agency, knowledge, and belief. 

\item \emph{\textbf{Motivational responsibility}}: an agent is motivationally responsible for a state of affairs iff the agent is the material author and it behaved knowingly and intentionally while bringing about the state of affairs. The components that this category involves are agency, knowledge, and intentions.
\end{enumerate}

Finally, as for point (c), the two \emph{forms} of responsibility are the \emph{active} form and the \emph{passive} form. The active form of responsibility concerns contributions, and the passive form of responsibility concerns 
omissions. 

Now, key elements in my operational definition of responsibility are the notions of blame and praise. Intuitively, responsibility can be measured by how much blame or how much praise an agent gets for its participation in bringing about a state of affairs.  As mentioned before,  \emph{ought-to-do's} can provide a criterion for deciding when agents should be blamed and when agents should be praised. The main idea is as follows: if agent $\alpha$ ought to have done $\phi$, then having seen to it that $\phi$ makes $\alpha$ praiseworthy, while having refrained from seeing to it that $\phi$ makes $\alpha$ blameworthy. For a given $\phi$, then, the degrees of $\alpha$'s praiseworthiness/blameworthiness correspond to the possible combinations between (a) an agent's ought-to-do's and (b) the active/passive forms of the three categories of responsibility. 

\section{A Logic of Responsibility}
\label{logic}

We are ready to introduce  \emph{intentional epistemic act-utilitarian stit theory} (\emph{IEAUST}), a stit-theoretic logic of responsibility. Without further ado, let me address the syntax and semantics of this expressive framework. 
\subsection{Syntax \& Semantics}

\begin{definition}[Syntax of intentional epistemic act-utilitarian stit theory]
\label{syntaxres}
Given a finite set $Ags$ of agent names and a countable set of propositions $P$, the grammar for the formal language $\mathcal L_{\textsf{R}}$ is given by
\[ \begin{array}{ll}
\phi ::= p \mid \neg \phi \mid \phi \wedge \phi \mid \Box \phi \mid [\alpha] \phi \mid K_\alpha \phi \mid I_\alpha \phi \mid \odot_\alpha \phi \mid\odot^{\mathcal{S}}_\alpha \phi,
\end{array} \]
where $p$ ranges over $P$ and $\alpha$ ranges over $Ags$. 
\end{definition}

In this language, $\square\varphi$ is meant to express the historical necessity of $\varphi$
($\Diamond \varphi$ abbreviates $\neg \square \neg \varphi$); $[\alpha] \varphi$ expresses that `agent $\alpha$ has seen to it that $\varphi$'; 
$K_\alpha \phi$ expresses that `$\alpha$ knew $\varphi$';
$I_\alpha \phi$ expresses that `$\alpha$ had a  present-directed intention toward the realization of $\varphi$';
$\odot_\alpha\phi$ expresses that `$\alpha$ objectively ought to have seen to it that $\phi$'; and $\odot^{\mathcal{S}}_\alpha \phi$ expresses that `$\alpha$ subjectively ought to have seen to it that $\phi$.' As for the semantics, the structures on which the formulas of $\mathcal L_{\mathtt{R}}$  are evaluated are based on what I call \emph{knowledge-intentions-oughts branching-time frames}. Let me first present the formal definition of these frames and then review the intuitions behind the extensions. 

\begin{definition}[\emph{Kiobt}-frames  \& models]
\label{kiobtframes}
A tuple $\left\langle M,\sqsubset, Ags,\mathbf{\mathbf{Choice}}, \left\{\sim_\alpha\right\}_{\alpha\in Ags}, \tau,  \mathbf{Value}  \right\rangle$ is called a \emph{knowledge-intention-oughts branching-time frame} (\emph{kiobt}-frame for short) iff
\begin{itemize}

\item $M$ is a non-empty set of \textnormal{moments} and $\sqsubset$ is a strict partial ordering on $M$ satisfying `no backward branching.' Each maximal $\sqsubset$-chain of moments is called a $\textnormal{history}$, where each history represents a complete temporal evolution of the world. $H$ denotes the set of all histories, and for each $m\in M$, $H_m:=\{h \in H ;m\in h\}$. Tuples $\left\langle m,h \right\rangle$ such that $m \in M$, $h \in H$, and $m\in h$, are called \emph{indices}, and the set of indices is denoted by $I(M\times H)$.  $\mathbf{Choice}$ is a function that maps each agent $\alpha$ and moment $m$ to a partition $\mathbf{Choice}^m_\alpha$ of $H_m$, where the cells of such a partition represent $\alpha$'s available actions at $m$. For $m\in M$ and $h\in H_m$, we denote the equivalence class of $h$ in $\mathbf{Choice}^m_\alpha$ by $\mathbf{Choice}^m_\alpha(h)$. $\mathbf{Choice}$ satisfies two constraints: \begin{itemize}
\item[$(\mathtt{NC})$] \emph{No choice between undivided histories}: For all $h, h'\in H_m$, if $m'\in h\cap h'$ for some $m' \sqsupset m$, then $h\in L$ iff $h'\in L$ for every $L\in \mathbf{Choice}^m_\alpha$. 

\item[$(\mathtt{IA})$]\emph{Independence of agency}: A function $s$ on $Ags$ is called a \emph{selection function} at $m$ if it assigns to each $\alpha$ a member of $\mathbf{Choice}^m_\alpha$. If we denote by $\mathbf{Select}^m$ the set of all selection functions at $m$, then we have that for every $m\in M$ and $s\in\mathbf{Select}^m$, $\bigcap_{\alpha \in Ags} s(\alpha)\neq \emptyset$ (see \cite{belnap01facing} for a discussion of the property).
\end{itemize}

\item For $\alpha\in Ags$, $\sim_\alpha$ is the epistemic indistinguishability equivalence relation for agent $\alpha$, which satisfies the following constraints:
\begin{itemize}
\item $(\mathtt{OAC})$ \emph{Own action condition}: if $\langle m_*, h_*\rangle\sim_\alpha \langle m, h\rangle$, then $\langle m_*,h_*'\rangle\sim_\alpha \langle m,h\rangle$ for every $h_*'\in \mathbf{Choice}^{m_*}_\alpha (h_*)$. We refer to this constraint as the `own action condition' because it implies that agents do not know more than what they perform.

\item $(\mathtt{Unif-H})$ \emph{Uniformity of historical possibility}: if $\langle m_*, h_*\rangle \sim_\alpha \langle m, h\rangle $, then for every $h_*'\in H_{m_*}$ there exists $h'\in H_m$ such that $\langle m_*, h_*'\rangle \sim_\alpha \langle m, h'\rangle$. Combined with $(\mathtt{OAC})$, this constraint is meant to capture a notion of uniformity of strategies, where epistemically indistinguishable indices should have the same available actions for the agent to choose upon.
\end{itemize}

For $\langle m,h\rangle$ and $\alpha\in Ags$, the set $\pi_\alpha^\square[\langle m, h \rangle]:=\{\langle m', h' \rangle; \exists h''\in H_{m'} s.t. \langle m, h \rangle\sim_\alpha\langle m', h'' \rangle \}$ is known as $\alpha$'s \emph{ex ante information set}.

 \item  $\tau$ is a function that assigns to each $\alpha\in Ags$ and index $\langle m, h \rangle$ a topology $\tau_\alpha^{\langle m, h \rangle}$ on $\pi_\alpha^\square\left[\langle m, h \rangle\right]$. This is the \emph{topology  of $\alpha$'s intentionality at $\langle m, h \rangle$}, where any non-empty open set is interpreted as a \emph{present-directed intention}, written `$\mbox{p-d}$ intention' from here on, of $\alpha$ at $\langle m, h\rangle$. Additionally, $\tau$ must satisfy the following conditions: 
    \begin{itemize}
    \item $(\mathtt{CI})$ \emph{Finitary consistency of intention}: for every $\alpha \in Ags$ and index $\langle m,h\rangle$, every non-empty $U, V\in\tau_\alpha^{\langle m, h \rangle}$ are such that $U\cap V\neq \emptyset$. In other words, every non-empty $U \in\tau_\alpha^{\langle m, h \rangle}$ is $\tau_\alpha^{\langle m, h \rangle}$-dense. 
    
    \item $(\mathtt{KI})$ \emph{Knowledge of intention}: for every $\alpha \in Ags$ and index $\langle m,h\rangle$, $\tau_\alpha^{\langle m, h \rangle}=\tau_\alpha^{\langle m', h' \rangle}$ for every $\langle m',h'\rangle$ such that $\pi_\alpha^\square\left[\langle m, h \rangle\right]=\pi_\alpha^\square\left[\langle m', h' \rangle\right]$. In other words, $\alpha$ has the same topology of $\mbox{p-d}$ intentions at all indices lying within $\alpha$'s current \emph{ex ante} information set.
  
\end{itemize}
\sloppy
\item $\mathbf{Value}$ is a deontic function that assigns to each history $h\in H$ a real number, representing the utility of $h$.

\end{itemize} 
A \emph{kiobt}-model $\mathcal{M}$, then, results from adding a valuation function $\mathcal{V}$ to a \emph{kiobt}-frame,  where $\mathcal{V}: P\to 
2^{I(M \times H)}$ assigns to each atomic proposition a set of indices.
\end{definition}

For $\alpha\in Ags$, the equivalence relation $\sim_\alpha$ is the usual indistinguishability relation, borrowed from epistemic logic, that represents $\alpha$'s uncertainty: whatever holds at all epistemically accessible indices is what an agent knows. As for the function $\tau$, it assigns to each agent the topology of  intentions, according to the ideas presented by \cite{abarca2022int}. The open sets of any such topology are taken to be $\mbox{p-d}$ intentions for bringing about circumstances. At each moment, the fact that the non-empty open sets of the topologies are dense implies that an agent's intentions are consistent. 

Regarding the deontic dimension, the idea is that objective, subjective, and doxastic ought-to-do's stem from the optimal actions for an agent: to have seen to it that $\phi$ is taken to be an obligation of an agent at an index iff $\phi$ is an effect of all the optimal actions for that agent and index, where the notion of optimality is based on the deontic value of the histories in those actions---provided by $\mathbf{Value}$.  The semantics for formulas involving the deontic operators require some previous definitions. For $m\in M$ and $\beta\in Ags$, we define $\mathbf{State}_\beta^m=\left\{S\subseteq H_m; S=\bigcap_{\alpha \in Ags-\{\beta\}} s(\alpha), \mbox{ where }s\in \mathbf{Select}^m\right\}$. For $\alpha\in Ags$ and $m_*\in M$, we first define
a general ordering $\leq$ on $\mathcal{P}(H_{m_*})$ such that for $X, Y\subseteq H_{m_*}$, $
X\leq Y \textnormal{ iff } \mathbf{Value}(h) \leq \mathbf{Value}(h') \textnormal{ for every } h\in X, h'\in Y$. The \emph{objective} dominance ordering $\preceq$ is defined such that for $L, L'\in \mathbf{Choice}_\alpha^{m_*}$, $
L\preceq L' \textnormal{ iff }  \mbox{for each } S\in \mathbf{State}_\alpha^{m_*}, L\cap S \leq L'\cap S.$ The optimal set of actions is taken as $\mathbf{Optimal}^{m_*}_\alpha:=\{L \in \mathbf{Choice}^{m_*}_\alpha ; \textnormal{there is no } L' \in \mathbf{Choice}^{m_*}_\alpha \textnormal{such that }  L\prec L'\}.$    

\emph{Subjective} ought-to-do's involve a different dominance ordering. To define it, \cite{JANANDI} and \cite{abarca2019logic} introduce the so-called \emph{epistemic clusters}, which are nothing more than a given action's epistemic equivalents in indices that are indistinguishable to the one of evaluation. Formally, we have that for  $\alpha\in Ags$, $m_*, m\in M$, and $L\subseteq H_{m_*}$, $L$'s \emph{epistemic cluster} at $m$ is the set $[L]^m_\alpha:=\{h\in H_m ; \exists h_*\in L \ \textnormal{ s.t. }\ \langle m_*,h_*\rangle \sim_\alpha \langle m,h\rangle \}$. A subjective dominance ordering $\preceq_s$ on $\mathbf{Choice}^{m_*}_\alpha$ is then defined by the following rule: for $L, L'\subseteq H_{m_*}$, $L\preceq_s L'$ iff $\mbox{for each } m$ such that $m_*\sim_\alpha m$, $\mbox{for each } S\in \mathbf{State}_\alpha^{m}, [L]^m_\alpha\cap S \leq [L']^m_\alpha\cap S$.\footnote{As a convention, I write $m \sim_\alpha m'$ if there exist $h\in H_m$, $h'\in H_{m'}$ such that $\langle m,h\rangle \sim_\alpha \langle m',h'\rangle$.} Just as in the case of objective ought-to-do's, this ordering allows us to define a subjectively optimal set of actions $\mathbf{SOptimal}^{m_*}_\alpha:=\{L \in \mathbf{Choice}^{m_*}_\alpha ; \textnormal{ there is no } L' \in \mathbf{Choice}^{m_*}_\alpha \textnormal{ s. t. }  L\prec_s L'\},$ where I write $L\prec_s L'$ iff $L\preceq_s L'$ and $L'\npreceq_s L$.

\fussy

Therefore, \emph{kiobt}-frames allow us to represent the  components of responsibility discussed in the introduction: agency, knowledge, intentions, and ought-to-do's. More precisely, they allow us to provide semantics for the modalities of $\mathcal{L}_{\textsf{R}}$:

\begin{definition}[Evaluation rules for \emph{IEAUST}]
\label{evares}
Let $\mathcal{M}$ be a finite-choice \emph{kiobt}-model.\footnote{Finite-choice \emph{bt}-models are those for which function $\mathbf{Choice}$ is such that $\mathbf{Choice}_\alpha^m$ is finite for every $\alpha\in Ags$ and $m\in M$. I focus on finite-choice models to simplify the evaluation rules for objective and subjective ought-to-do's. The reader is referred to \cite{abarca2019logic} for the evaluation rules in the case of infinite-choice models.}  The semantics on $\mathcal{M}$ for the formulas of $\mathcal {L}_{\textsf{R}}$ are recursively defined by the following truth conditions: 
\[ \begin{array}{lll}
\mathcal{M},\langle m,h \rangle \models p & \mbox{ iff } & \langle m,h \rangle \in \mathcal{V}(p) \\

\mathcal{M},\langle m,h \rangle \models \neg \phi & \mbox{ iff } & \mathcal{M},\langle m,h \rangle \not\models \phi \\

\mathcal{M},\langle m,h \rangle \models \phi \wedge \psi & \mbox{ iff } & \mathcal{M},\langle m,h \rangle \models \phi \mbox{ and } \mathcal{M},\langle m,h \rangle \models \psi \\

\mathcal{M},\langle m,h \rangle \models \Box \phi &
\mbox{ iff } & \mbox{for all } h'\in H_m, \mathcal{M},\langle m,h' \rangle \models \phi \\

\mathcal{M},\langle m,h \rangle \models [\alpha]
\phi & \mbox{ iff } & \mbox{for all } h'\in \mathbf{Choice}^m_\alpha(h), \mathcal{M},\langle m, h'\rangle \models \phi\\

\mathcal{M},\langle m,h \rangle \models K_{\alpha} \phi &
\mbox{ iff } & \mbox{for all } \langle m',h'\rangle \mbox{ s. t. }  \langle m,h \rangle \sim_{\alpha}\langle m',h' \rangle,\\&&  \mathcal{M},\langle m',h' \rangle \models \phi\\

\mathcal{M},\langle m,h \rangle \models I_\alpha\phi &
\mbox{ iff } & \mbox{there exists } U\in \tau_\alpha^{\langle m, h \rangle} \mbox{ s. t. }  U\subseteq \|\phi \|\\

\mathcal{M},\langle m,h \rangle \models \odot_\alpha \phi &
\mbox{ iff } & \mbox{for all } L\in \mathbf{Optimal} ^{m}_\alpha,  \mathcal{M},\left\langle m,h'\right\rangle\models \varphi \\&& \mbox{for every } h'\in L\\

\mathcal{M},\langle m,h \rangle \models \odot^{\mathcal{S}}_\alpha \varphi & \mbox{ iff }  & \mbox{for all } L\in \mathbf{SOptimal} ^{m}_\alpha,  \mathcal{M},\left\langle m',h'\right\rangle\models \varphi \\&& \mbox{for every } m'  \mbox{ s. t. } m\sim_\alpha m'\mbox{ and every } h'\in [L]^{m''}_\alpha.\\
\end{array} \]
where $\|\phi\|$ refers to the set $\left\{\left\langle m,h\right\rangle \in I(M\times H) ;\mathcal{M},\left\langle m, h\right\rangle \models \phi\right\}$. 
\end{definition}

\subsection{Formalization of Sub-Categories of Responsibility}
\label{form}

The logic introduced in the previous subsection allows us to formalize different modes of responsibility by means of formulas of $\mathcal{L}_{\textsf{R}}$. Before diving into the formulas, let me present an operational definition for the expression `mode of responsibility.' For $\alpha\in Ags$, index $\langle m, h \rangle$, and $\phi$ of $\mathcal{L}_{\textsf{R}}$, a \emph{mode of $\alpha$'s responsibility with respect to $\phi$ at $\langle m, h \rangle$} is a tuple consisting of three constituents: (1) a set of categories, taken from Broersen's three categories of responsibility, that applies to the relation between $\alpha$ and $\phi$ at $\langle m, h \rangle$, (2) the forms of responsibility---active or passive---that apply to the categories in said set, and (3) a deontic context, determining whether the forms of the categories are either blameworthy, praiseworthy, or neutral. As for constituents (1) and (2), observe that the active and passive forms of the three categories of responsibility lead to sub-categories of the notion. For clarity, first I will introduce the stit-theoretic characterizations of these sub-categories; afterwards, in Subsection~\ref{formmod}, these sub-categories will be discussed against the backdrop of the deontic contexts that will decide their degree of blameworthiness or praiseworthiness (constituent (3) in a given mode). 

A maxim usually endorsed in the philosophical literature on moral responsibility is the \emph{principle of alternate possibilities}. According to this principle, ``a person is morally responsible for what he has done only if he could have done otherwise'' \cite{frankfurt2018alternate}.
Following the example of \cite{lorini2014logical}, then, I adopt the intuitions behind deliberative agency and restrict my view on responsibility to situations where agents can be said to actually have had a hand in bringing about  states of affairs. Therefore, each sub-category of $\alpha$'s responsibility with respect to $\phi$ at $\langle m, h\rangle$ will include a positive condition---concerning the realization of $\phi$---and a negative condition---concerning the realization of $\lnot \phi$.  For $\alpha\in Ags$ and $\phi$ of $\mathcal{L}_{\textsf{R}}$, the main sub-categories of $\alpha$'s responsibility with respect to $\phi$ are displayed in Table~\ref{table:kb0}.
\begin{table}[!htb]
\centering\renewcommand\cellalign{lc}
\setcellgapes{4pt}\makegapedcells
\begin{tabularx}{.95\textwidth}{ |>{\centering\arraybackslash}X | >{\centering\arraybackslash}X |>{\centering\arraybackslash}X | } \hline\diagbox{\emph{Category}}{\emph{Form}} & Active (contributions) &  Passive (omissions)\\ \hline \makecell{Causal} & \makecell{$[\alpha]\phi\land \Diamond[\alpha] \lnot\phi$} & \makecell{$\phi \land \Diamond [\alpha]\lnot \phi$} \\ \hline \makecell{Informational} & \makecell{$K_\alpha[\alpha]\phi\land \Diamond K_\alpha [\alpha] \lnot\phi$ } & \makecell{$\phi\land K_\alpha\lnot[\alpha]\lnot\phi \land$\\ $\Diamond K_\alpha[\alpha]\lnot \phi$}\\ \hline \makecell{Motivational} & \makecell{$K_\alpha[\alpha]\phi\land I_\alpha[\alpha]\phi\land$ \\ $\Diamond K_\alpha [\alpha]\lnot\phi$} & \makecell{$\phi\land K_\alpha\lnot[\alpha]\lnot\phi\land$\\$I_\alpha\lnot[\alpha]\lnot\phi\land\Diamond K_\alpha[\alpha]\lnot \phi$}\\ \hline\end{tabularx}
\caption{Main sub-categories.}
\label{table:kb0}
\end{table}

Let me explain and discuss Table~\ref{table:kb0}. Let $\mathcal{M}$ be a \emph{kiobt}-model. For $\alpha\in Ags$ and index $\langle m,h\rangle$, the sub-categories of $\alpha$'s responsibility with respect to $\phi$ at $\langle m, h \rangle$ are defined as follows: \begin{itemize}
        \item\phantomsection\label{gogol} $\alpha$ was \emph{causal-active responsible} for $\phi$ at  $\langle m,h\rangle$  iff  at $\langle m,h\rangle$ $\alpha$ has seen to it that $\phi$ (the positive condition) and it was possible for $\alpha$ to prevent $\phi$ (the negative condition). As such, I refer to state of affairs $\phi$ as a causal contribution of $\alpha$ at $\langle m,h\rangle$. $\alpha$  was \emph{causal-passive responsible} for $\phi$ at $\langle m,h\rangle$  iff at $\langle m,h\rangle$   $\phi$ was the case (the positive condition), and $\alpha$ refrained from preventing $\phi$ while it was possible for $\alpha$ to prevent $\phi$ (the negative conditions). To clarify, formula $\phi\to \lnot[\alpha]\lnot\phi$ is valid, so that if $\phi$ was the case then $\alpha$ refrained from preventing $\phi$. I refer to $\lnot\phi$ as a causal omission of $\alpha$ at $\langle m,h\rangle$. 
    \item $\alpha$  was \emph{informational-active responsible} for $\phi$ at $\langle m,h\rangle$  iff at $\langle m,h\rangle$ $\alpha$ has knowingly seen to it that $\phi$ (the positive condition) and it was possible for $\alpha$ to knowingly prevent $\phi$ (the negative condition). I refer to $\phi$ as a conscious contribution of $\alpha$ at $\langle m,h\rangle$. $\alpha$ was \emph{informational-passive responsible} for $\phi$ at $\langle m,h\rangle$  iff at $\langle m,h\rangle$ $\phi$ was the case (the positive condition), and $\alpha$ knowingly refrained from preventing $\phi$ while it was possible for $\alpha$ to knowingly prevent $\phi$ (the negative conditions). I refer to $\lnot\phi$ as a conscious omission of $\alpha$ at $\langle m,h\rangle$.
 
    \item $\alpha$ was \emph{motivational-active responsible} for $\phi$ at $\langle m,h\rangle$  iff at $\langle m,h\rangle$  $\alpha$ has both knowingly and intentionally seen to it that $\phi$ (the positive conditions) and it was possible for $\alpha$ to knowingly prevent $\phi$ (the negative condition). I refer to $\phi$ as a motivational contribution of $\alpha$ at $\langle m,h\rangle$.
         $\alpha$ was \emph{motivational-passive responsible} for $\phi$ at $\langle m,h\rangle$  iff at $\langle m,h\rangle$  $\phi$ was the case (the positive condition), and $\alpha$ both knowingly and intentionally refrained from preventing $\phi$  while it was possible for $\alpha$ to knowingly prevent $\phi$ (the negative conditions). I refer to $\lnot\phi$ as a motivational omission of $\alpha$ at $\langle m,h\rangle$.

\end{itemize}  
  
The main reason for setting the negative conditions as stated in Table~\ref{table:kb0} is that it greatly simplifies the relation between the active and the passive forms of responsibility. That said, it is important to mention that these negative conditions lead to a policy that I call \emph{leniency on blameworthy agents}. 

Two important observations concerning the relations between these sub-categories are the following: 
\begin{enumerate}
    \item \begin{enumerate}[(a)]
    \item If $\alpha$ was informational-active, resp. informational-passive, responsible for $\phi$ at $\langle m, h \rangle$, then $\alpha$ was causal-active, resp. causal-passive, responsible for $\phi$ at $\langle m, h \rangle$; the converse is not true. 
    
    \item If $\alpha$ was motivational-active, resp. motivational-passive, responsible for $\phi$ at $\langle m, h \rangle$, then $\alpha$ was informational-active, resp. informational-passive, responsible for $\phi$ at $\langle m, h \rangle$; the converse is not true. 
    \end{enumerate}
    
    \item \label{itm:bside} For all three categories, the active form of  responsibility with respect to $\phi$ implies the passive form.
\end{enumerate}

\subsection{Formalization of Modes of Responsibility}
\label{formmod}

In Section~\ref{respo} I explained that obligations provide the deontic contexts of responsibility, which in turn determine degrees of praiseworthiness/blameworthiness for instances of the notion. Let $\mathcal{M}$ be a \emph{kiobt}-model. Take $\alpha\in Ags$, and let $\phi$ be a formula of $\mathcal{L}_{\textsf{R}}$. For each index $\langle m, h\rangle$, there are 4 main possibilities for conjunctions of deontic modalities holding at $\langle m, h\rangle$, according to whether $\Delta\phi$ or $\lnot \Delta\phi$ is satisfied at the index, where $\Delta\in\left\{\odot_\alpha, \odot^{\mathcal{S}}_\alpha\right\}$. I refer to any such conjunction as a \emph{deontic context for $\alpha$'s responsibility with respect to $\phi$ at $\langle m, h\rangle$}. Thus, these contexts render 4 main levels of praiseworthiness, resp. blameworthiness, under the premise that bringing about $\phi$ is praiseworthy and refraining from bringing about $\phi$ is blameworthy. I use numbers $1$--$4$ to refer to these levels, so that \emph{Level} $1$ corresponds the highest level of praiseworthiness, resp. blameworthiness, and \emph{Level} $4$ corresponds to the lowest level. 

\noindent\underline{\textbf{\emph{Level 1}}}: when deontic context $\odot_\alpha\phi\land \odot^{\mathcal{S}}_\alpha\phi$ holds at $\langle m, h\rangle$, which occurs iff at $\langle m, h\rangle$ $\alpha$ objectively and subjectively ought to have seen to it that $\phi$. \noindent\underline{\textbf{\emph{Level 2}}}: when deontic context $\lnot \odot_\alpha\phi\land \odot^{\mathcal{S}}_\alpha\phi$ holds at $\langle m, h \rangle$, which occurs iff at $\langle m, h\rangle$ $\alpha$ subjectively ought to have seen to it that $\phi$, but $\alpha$ did not objectively ought to have seen to it that $\phi$. \noindent\underline{\textbf{\emph{Level 3}}}: when deontic context  $ \odot_\alpha\phi\land \lnot \odot^{\mathcal{S}}_\alpha\phi$ holds at $\langle m, h \rangle$, which occurs iff at $\langle m, h\rangle$ $\alpha$ objectively ought to have seen to it that $\phi$, but $\alpha$ did not
subjectively ought to have seen to it that $\phi$. \noindent\underline{\textbf{\emph{Level 4}}}: when deontic context  $\lnot\odot_\alpha\phi\land  \lnot \odot^{\mathcal{S}}_\alpha\phi$ holds at $\langle m, h \rangle$, where, unless $\alpha$ either objectively or subjectively ought have seen to it that $\lnot \phi$ at $\langle m, h \rangle$ (which would imply that a deontic context of the previous levels holds with respect to $\lnot\phi$), neither bringing about $\phi$ nor refraining from doing so elicits any interest in terms of blame-or-praise assignment. 

For each of these deontic contexts, the \emph{basic modes of $\alpha$'s active responsibility with respect to $\phi$ at $\langle m, h\rangle$} are displayed in Table~\ref{table:modes0}, and the \emph{basic modes of $\alpha$'s passive responsibility} are obtained by substituting the term 'passive' for 'active' in such a table.
 \begin{table}[!htb]
\centering\renewcommand\cellalign{lc}
\setcellgapes{3pt}\makegapedcells
\begin{tabularx}{\textwidth}{ |>{\centering\arraybackslash\hsize=.7\hsize}X | >{\centering\arraybackslash\hsize=1.2\hsize}X |>{\centering\arraybackslash\hsize=1.1\hsize}X | } \hline\diagbox{\textbf{\emph{Deg.}}}{\textbf{\emph{Att.}}} & \makecell{\textbf{Praiseworthiness}} &  \makecell{\textbf{Blameworthiness}} \\ \hline \makecell{$\mbox{\emph{Low}}_A$} & \makecell{\small{Causal-active for $\phi$ \checkmark}\\ \hline \small{Infor.-active for $\phi$ \xmark}\\ \hline \small{Motiv.-active for $\phi$ \xmark}} & \makecell{\small{Causal-active for $\lnot\phi$ \checkmark}\\ \hline \small{Infor.-active for $\lnot\phi$ \xmark}\\ \hline \small{Motiv.-active for $\lnot\phi$ \xmark}} \\ \hline   \makecell{$\mbox{\emph{Middle}}_A$} & \makecell{\small{Causal-active  for $\phi$ \checkmark}\\ \hline \small{Infor.-active  for $\phi$ \checkmark}\\ \hline \small{Motiv.-active  for $\phi$ \xmark}} & \makecell{\small{Causal-active for $\lnot\phi$ \checkmark}\\ \hline \small{Infor.-active for $\lnot\phi$ \checkmark}\\ \hline \small{Motiv.-active for $\lnot\phi$ \xmark}}   \\ \hline \makecell{$\mbox{\emph{High}}_A$} & \makecell{\small{Causal-active for $\phi$ \checkmark}\\ \hline \small{Infor.-active for $\phi$ \checkmark}\\ \hline \small{Motiv.-active for $\phi$ \checkmark}} & \makecell{\small{Causal-active for $\lnot\phi$ \checkmark}\\ \hline \small{Infor.-active  for $\lnot\phi$ \checkmark}\\ \hline \small{Motiv.-active for $\lnot\phi$ \checkmark}}  \\ \hline \end{tabularx}
\caption{Modes of $\alpha$'s active responsibility with respect to $\phi$.}
\label{table:modes0}
\end{table}

\section{Axiomatization}
\label{axiomres}

This section is devoted to introducing proof systems for \emph{IEAUST}. More precisely, I present two systems:

\begin{itemize}
    \item A sound system for \emph{IEAUST}, for which achieving a completeness result is still an open problem.
    \item A sound and complete system for a technical extension of \emph{IEAUST} that I refer to as \emph{bi-valued} \emph{IEAUST}. Bi-valued \emph{IEAUST} was devised with the aim of having a completeness result for a logic that would be reasonably similar to the one presented in Section~\ref{logic}. 
\end{itemize}

As for the first bullet point, a proof system for  \emph{IEAUST} is defined as follows:

\begin{definition}[Proof system for  \emph{IEAUST}]
\label{axiomres1}
Let $\Lambda_R$ be the proof system defined by the following axioms and rules of inference: 
\begin{itemize}
\item \emph{(Axioms)} All classical tautologies from propositional logic; the $\mathbf{S5}$ schemata for $\square$, $[\alpha]$,  and $K_\alpha$; the $\mathbf{KD}$ schemata for $I_\alpha$; and the schemata given in Table~\ref{table:magicjohnson}.
\begin{table}[!htb]
\begin{tabularx}{\textwidth}{|>{\raggedleft\arraybackslash}X | >{\raggedleft\arraybackslash}X|} \hline \makecell[t]{\emph{Basic-stit-theory schemata}:\\
\small
$\begin{array}{ll}
\square \phi\to [\alpha ] \phi & (SET)\\
\mbox{For distinct $\alpha_1,\dots,\alpha_m$}, &\\
\bigwedge\limits_{1\leq k\leq m}\Diamond [\alpha_i ] \phi_i \to \Diamond\left(\bigwedge\limits_{1\leq k\leq m}[\alpha_i ] \phi_i\right)& (IA)
\end{array}$}
& \makecell[t]{\emph{Schemata for knowledge:}\\
\small
$\begin{array}{ll}
K_\alpha \phi\to [\alpha ]\phi&(OAC)\\
\Diamond K_\alpha \phi \to K_\alpha \Diamond  \phi&(Unif-H)
\end{array}$} \\ \hline \makecell[t]{\emph{Schemata for objective ought-to-do's:}\\
\small
$\begin{array}{ll}
\odot_\alpha (\phi\to \psi)\to (\odot_\alpha \phi \to \odot_\alpha \psi)& (A1)\\ 
\square \phi\to \odot_\alpha \phi& (A2)\\
\odot_\alpha \phi\to\square\odot_\alpha \phi&(A3)\\ 
\odot_\alpha \phi\to  \odot_\alpha ([\alpha ]\phi)&(A4)\\
\odot_\alpha \phi\to \Diamond  [\alpha ] \phi &(Oic)
\end{array}$} & \makecell[t]{\emph{Schemata for subjective ought-to-do's:}\\
\small
$\begin{array}{ll}
\odot^{\mathcal{S}}_\alpha (\phi\to \psi)\to (\odot^{\mathcal{S}}_\alpha \phi \to \odot^{\mathcal{S}}_\alpha \psi)& (A5)\\
\odot^{\mathcal{S}}_\alpha \phi\to  \odot^{\mathcal{S}}_\alpha (K_\alpha \phi)&(A6)\\
K_\alpha \square \phi\to\odot^{\mathcal{S}}_\alpha\phi &(SuN)\\ \odot^{\mathcal{S}}_\alpha\phi\to\Diamond K_\alpha \phi & (s.Oic)\\
\odot^{\mathcal{S}}_\alpha \phi \to K_{\alpha} \square\odot^{\mathcal{S}}_\alpha \phi
&(s.Cl)\\
\odot^{\mathcal{S}}_\alpha \phi \to \lnot \odot_\alpha \lnot\phi
&(ConSO)
\end{array}$}\\ \hline  \makecell[t]{\emph{Schemata for intentionality:}\\
\small
$\begin{array}{ll}
\square  K_\alpha \phi \to  I_\alpha\phi &(InN)\\
I_\alpha\phi\to \square  K_\alpha I_\alpha \phi &(KI)
\end{array}$} & \\ \hline\end{tabularx}
\caption{Axioms for the modalities' interactions.}
\label{table:magicjohnson}
\end{table}

\item \textit{(Rules of inference)} \textit{Modus Ponens}, Substitution, and Necessitation for all modal operators.
\end{itemize}
\end{definition}

For a discussion of all these axioms and schemas, the reader is referred to \cite{abarca2019logic, MUR, abarca2022int}. 
An important result for $\Lambda_R$, then, is the following proposition, whose proof is relegated to Appendix \ref{metalogicres}. 

\begin{restatable}[Soundness of $\Lambda_R$]{proposition}{soundres}
The proof system ${\Lambda_R}$ is sound with respect to the class of \emph{kiobt}-models.
\end{restatable}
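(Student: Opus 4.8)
The plan is to argue by induction on the length of derivations in $\Lambda_R$: it suffices to show that every axiom schema is valid on the class of \emph{kiobt}-models and that each rule of inference preserves validity. For the rules, \emph{Modus Ponens} and Substitution are handled as usual, and Necessitation for the universal-type operators $\square,[\alpha],K_\alpha,\odot_\alpha,\odot^{\mathcal{S}}_\alpha$ is immediate, since a formula true at every index of every model is a fortiori true at every index accessible through the corresponding relation. Necessitation for $I_\alpha$ deserves a remark: if $\phi$ is valid then $\|\phi\|=I(M\times H)$, so the whole \emph{ex ante} set $\pi_\alpha^\square[\langle m,h\rangle]$---which is the top element of $\tau_\alpha^{\langle m,h\rangle}$ and is non-empty (it contains $\langle m,h\rangle$)---witnesses $I_\alpha\phi$ everywhere. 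The classical tautologies and the $\mathbf{S5}$ schemata for $\square$, $[\alpha]$ and $K_\alpha$ follow from the fact that the associated accessibility relations are equivalence relations: sharing a moment for $\square$, sharing a moment and a $\mathbf{Choice}^m_\alpha$-cell for $[\alpha]$ (a partition, hence an equivalence), and $\sim_\alpha$ itself for $K_\alpha$.

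Next I would verify the stit-theoretic, epistemic and objective-deontic schemata, each of which mirrors a frame constraint, exactly as in the fragments treated in \cite{abarca2019logic,MUR,abarca2022int}. Concretely: $(SET)$ and $(A2)$ hold because $\mathbf{Choice}^m_\alpha(h)$ and every optimal action are subsets of $H_m$; $(IA)$ is the semantic counterpart of the independence-of-agency constraint $(\mathtt{IA})$; $(OAC)$ is read off the own-action condition $(\mathtt{OAC})$; and $(\mathtt{Unif\mbox{-}H})$ yields $(Unif\mbox{-}H)$ by, given $\langle m,h\rangle\sim_\alpha\langle m',h'\rangle$ and a witness $h_0\in H_m$ for $\Diamond K_\alpha\phi$, transporting $h_0$ to a history through $m'$ that remains $\sim_\alpha$-related to $\langle m,h_0\rangle$. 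For the objective oughts, $(A1)$ is the normality of the box over $\bigcup\mathbf{Optimal}^m_\alpha$; $(A3)$ holds because $\odot_\alpha\phi$ depends only on $m$; $(A4)$ holds because every optimal action is itself a $\mathbf{Choice}^m_\alpha$-cell, so $\phi$ on all of $L$ gives $[\alpha]\phi$ at each $h'\in L$; and $(Oic)$ additionally requires $\mathbf{Optimal}^m_\alpha\neq\emptyset$, which is exactly where the finite-choice hypothesis enters, finiteness guaranteeing $\preceq$-maximal cells.

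The genuinely new part concerns intentionality, and here the density constraint $(\mathtt{CI})$ does the work. The $\mathbf{KD}$ schemata for $I_\alpha$ follow from it directly: for the $K$-axiom, given non-empty opens $U\subseteq\|\phi\to\psi\|$ and $V\subseteq\|\phi\|$, density makes $U\cap V$ a non-empty open contained in $\|\psi\|$; for the $D$-axiom, opens witnessing $I_\alpha\phi$ and $I_\alpha\neg\phi$ would be disjoint, contradicting density. For the interaction axioms I would first record the key lemma that the \emph{ex ante} set---and hence, by $(\mathtt{KI})$, the whole topology $\tau_\alpha^{\langle m,h\rangle}$---is constant across the histories of a single moment: using $(\mathtt{Unif\mbox{-}H})$ one shows $\pi_\alpha^\square[\langle m,g\rangle]=\pi_\alpha^\square[\langle m,h\rangle]$ for all $g\in H_m$. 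With this, $(InN)$ follows because $\square K_\alpha\phi$ forces $\pi_\alpha^\square[\langle m,h\rangle]\subseteq\|\phi\|$ (again via $(\mathtt{Unif\mbox{-}H})$), so the non-empty top open witnesses $I_\alpha\phi$; and $(KI)$ follows because $I_\alpha\phi$ at $\langle m,h\rangle$ is inherited by every $\langle m,g\rangle$ (equal topology) and thence by every index $\sim_\alpha$-accessible from such $\langle m,g\rangle$ (again equal topology, since $\sim_\alpha$-related indices share their \emph{ex ante} set), which is exactly $\square K_\alpha I_\alpha\phi$.

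I expect the main obstacle to lie with the subjective-ought schemata $(A5),(A6),(SuN),(s.Oic),(s.Cl)$ and, above all, $(ConSO)$, because the truth condition for $\odot^{\mathcal{S}}_\alpha$ quantifies simultaneously over $\mathbf{SOptimal}^m_\alpha$, over moments $m'$ with $m\sim_\alpha m'$, and over epistemic clusters $[L]^{m'}_\alpha$. For $(A6)$ and $(s.Cl)$ I would show that this family of indices is closed under $\sim_\alpha$ and depends only on the $\sim_\alpha$-class of $m$, so that truth of $\phi$ on it upgrades to knowledge and to settled knowledge; $(s.Oic)$ again needs non-emptiness of $\mathbf{SOptimal}^m_\alpha$ from finite choice. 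The delicate case is $(ConSO)$: ruling out that $\alpha$ subjectively ought $\phi$ while objectively ought $\neg\phi$ requires a comparison lemma between the objective ordering $\preceq$ and the subjective ordering $\preceq_s$---showing their optimal sets cannot issue contradictory recommendations at a common moment---of the kind established in \cite{abarca2019logic,JANANDI}. This interplay between the two dominance orderings, rather than any single modality, is where the soundness argument is most intricate.
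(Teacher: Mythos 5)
Your plan follows the same route as the paper's proof: an axiom-by-axiom verification in which each schema is read off the corresponding frame condition or semantic clause, with the classical, $\mathbf{S5}$, stit, and objective-ought parts delegated to the standard literature and the epistemic, intentional, and subjective-ought schemata checked directly. Every step you actually sketch is correct, including the use of density $(\mathtt{CI})$ for the $\mathbf{KD}$ schemata of $I_\alpha$, the constancy of the \emph{ex ante} set for $(InN)$ and $(KI)$, and the role of finite choice in $(Oic)$ and $(s.Oic)$.

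The one place where you stop short of an argument is $(ConSO)$, and what you write there is slightly circular: the ``comparison lemma'' you invoke is stated as ``the optimal sets cannot issue contradictory recommendations,'' which is essentially the schema itself. The precise lemma the paper proves, and which you should isolate, is that $L\preceq_s L'$ implies $L\preceq L'$ for all $L,L'\in\mathbf{Choice}^m_\alpha$. This holds for a reason specific to \emph{kiobt}-frames: both dominance orderings are induced by the \emph{same} function $\mathbf{Value}$, and reflexivity of $\sim_\alpha$ gives $L\subseteq[L]^m_\alpha$ and $L'\subseteq[L']^m_\alpha$, so the value comparisons defining $\preceq_s$ at $m$ restrict to those defining $\preceq$. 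From this inclusion of orderings the contradiction is immediate: a subjectively optimal witness for $\phi$ would be objectively dominated into $\lnot\phi$ and subjectively undominated into $\phi$ at once. It is worth making this explicit because it is exactly the point that fails for bi-valued \emph{kiobt}-models, where the two orderings come from independent value functions---which is why the paper must drop $(ConSO)$ to obtain completeness for $\Lambda_R'$.
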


Unfortunately, the question of whether $\Lambda_R$ is complete with respect to the class of \emph{kiobt}-models is still an open problem. Now, in the search for a complete proof system for \emph{IEAUST}, and following a strategy found in my joint works with Jan Broersen \cite{abarca2019logic, abarca2021deontic}, I tried to first prove completeness of $\Lambda_R$ with respect to a class of more general models, that I refer to as \emph{bi-valued} \emph{kiobt}-models (Definition~\ref{multikb} below). This strategy led to the need of dropping one of the schemata in $\Lambda_R$: $(ConSO)$. More precisely, if $\Lambda_{R}'$ is obtained from $\Lambda_R$ by eliminating $(ConSO)$ in Definition~\ref{axiomres1}, then $\Lambda_{R}'$ turns out to be sound and complete with respect to the class of  \emph{bi-valued} \emph{kiobt}-models. The formal statements are included below.

\begin{definition}[Bi-valued \emph{kiobt}-frames  \& models]
\label{multikb}
$\left\langle M,\sqsubset, Ags,\mathbf{\mathbf{Choice}}, \left\{\sim_\alpha\right\}_{\alpha\in Ags}, \tau,  \mathbf{Value}_{\mathcal{O}}, \mathbf{Value}_{\mathcal{S}} \right\rangle$ is called a \emph{bi-valued} \emph{kiobt}-frame iff
\begin{itemize}

\item $M, \sqsubset, Ags, \mathbf{\mathbf{Choice}}$, $\left\{\sim_\alpha\right\}_{\alpha\in Ags}$, and $\tau$ are defined just as in Definition~\ref{kiobtframes}.

\item $\mathbf{Value}_{\mathcal{O}}$ and $\mathbf{Value}_{\mathcal{S}}$ are functions that independently assign to each history $h\in H$ a real number. 

\end{itemize} 
A \emph{bi-valued kiobt}-model $\mathcal{M}$, then, results from adding a valuation function $\mathcal{V}$ to a bi-valued \emph{kiobt}-frame,  where $\mathcal{V}: P\to 
2^{I(M \times H)}$ assigns to each atomic proposition of $\mathcal{L}_{\textsf{R}}$ a set of indices (recall that $P$ is the set of propositions in $\mathcal{L}_{\textsf{R}}$).
\end{definition}

The two value functions in bi-valued \emph{kiobt}-frames allow us to redefine the dominance orderings so that they are independent from one another, something that proves useful in achieving a completeness result in the style of \cite{abarca2019logic}. For $\alpha\in Ags$ and $m\in M$, two general orderings $\leq$ and $\leq_s$ are first defined on $2^{H_{m}}$: for $X, Y\subseteq H_{m}$, $
X\leq Y$, resp. $
X\leq_s Y$,  iff  $\mathtt{Value}_\mathcal{O}(h) \leq \mathtt{Value}_\mathcal{O}(h')$, resp. $\mathtt{Value}_\mathcal{S}(h) \leq \mathtt{Value}_\mathcal{S}(h')$, for every $h\in X$ and  $h'\in Y$. Then, for $\alpha\in Ags$ and $m\in M$, an objective dominance ordering $\preceq$ is now defined on $\mathbf{Choice}_\alpha^{m}$ by the rule: $
L\preceq L'$ iff for every  $S\in \mathbf{State}_\alpha^{m}, L\cap S \leq L'\cap S.$ 
 In turn, for $\alpha\in Ags$ and $m\in M$, a subjective dominance ordering $\preceq_s$ is now  defined on $\mathbf{Choice}_\alpha^{m}$ by the rule: $L\preceq_s L'$ iff for all $m'$ such that $m\sim_\alpha m'$ and each $S\in \mathbf{State}_\alpha^{m}, [L]^{m'}_\alpha\cap S \leq_s [L']^{m'}_\alpha\cap S$. With these new notions, the sets  $\mathbf{Optimal}^m_\alpha$ and $\mathbf{SOptimal}^m_\alpha$ are redefined accordingly, and the evaluation rules for the formulas of $\mathcal{L}_{\textsf{R}}$ (with respect to bi-valued \emph{kiobt}-models) are given just as in Definition~\ref{evares}. As mentioned before, I refer to the resulting logic as \emph{bi-valued} \emph{IEAUST}. Bi-valued \emph{IEAUST}, then, admits the following metalogic result, whose proof is sketched in Appendix \ref{metalogicres}.

\begin{restatable}[Soundness \& Completeness of $\Lambda_R'$]{theorem}{soundcompres}
Let $\Lambda_{R}'$ be the proof system  obtained from $\Lambda_{R}$ by eliminating $(ConSO)$ in Definition~\ref{axiomres1}. Then $\Lambda_{R}'$ is sound and complete with respect to the class of bi-valued \emph{kiobt}-models.
\end{restatable}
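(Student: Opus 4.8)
The plan is to prove the two halves separately, leaning on the already-established soundness of $\Lambda_R$ for the first half and on a Henkin-style canonical-model construction for the second. For soundness I would not reprove validity from scratch, but exploit that $\Lambda_R'$ is axiomatised by a strict subset of the schemata of $\Lambda_R$ (everything except (ConSO)) together with the same rules. Hence it suffices to check that each retained schema stays valid once the single function $\mathbf{Value}$ is replaced by the independent pair $\mathbf{Value}_{\mathcal{O}}, \mathbf{Value}_{\mathcal{S}}$ in the definitions of $\preceq$ and $\preceq_s$. The crucial observation is that every objective-ought schema (A1)--(A4), (Oic) constrains only $\odot_\alpha$, whose truth condition now depends solely on $\mathbf{Value}_{\mathcal{O}}$, while every subjective-ought schema (A5), (A6), (SuN), (s.Oic), (s.Cl) constrains only $\odot^{\mathcal{S}}_\alpha$, whose condition depends solely on $\mathbf{Value}_{\mathcal{S}}$; the structural schemata (SET), (IA), (OAC), (Unif-H), (InN), (KI) do not involve the value functions at all. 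None of these validity arguments uses any link between the two orderings, so they persist verbatim on bi-valued frames. The only schema that did rely on such a link is exactly (ConSO): $\odot^{\mathcal{S}}_\alpha\phi\to\lnot\odot_\alpha\lnot\phi$ fails the moment $\mathbf{Value}_{\mathcal{O}}$ and $\mathbf{Value}_{\mathcal{S}}$ disagree enough for an objectively sub-optimal but subjectively optimal cell to appear, which is precisely why it is dropped.

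For completeness I would build the canonical model $\mathcal{M}^c$ in the standard way: points are $\Lambda_R'$-maximal consistent sets, and to each modality I associate its canonical accessibility relation $R_\Box, R_{[\alpha]}, R_{K_\alpha}, R_{I_\alpha}, R_{\odot_\alpha}, R_{\odot^{\mathcal{S}}_\alpha}$. Lindenbaum's lemma provides the extension of any consistent set, and the usual argument reduces completeness to a Truth Lemma, which in turn reduces to exhibiting, from $\mathcal{M}^c$, a genuine bi-valued \emph{kiobt}-frame validating the same sets. The S5 schemata make $R_\Box, R_{[\alpha]}, R_{K_\alpha}$ equivalence relations and the KD schemata make $R_{I_\alpha}$ serial; (SET) yields $R_{[\alpha]}\subseteq R_\Box$ and (OAC) yields $R_{[\alpha]}\subseteq R_{K_\alpha}$, so that the $R_\Box$-classes furnish the moments, the $R_{[\alpha]}$-classes inside them furnish the partitions $\mathbf{Choice}^m_\alpha$, and (Unif-H) supplies the confluence between $R_\Box$ and $R_{K_\alpha}$ needed for the \emph{ex ante} information sets and for threading the own-action condition across distinct moments.

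The two genuinely hard points, which I expect to be the crux, are (i) independence of agency and (ii) reconstruction of the value functions. For (i), the canonical choice partitions need not be rectangular, so (IA) guarantees only pairwise compatibility of finitely many choices; to obtain honest nonempty intersections $\bigcap_\alpha s(\alpha)$ I would pass from $\mathcal{M}^c$ to a derived model by the copying/unravelling construction standard in stit completeness (following Xu and \cite{abarca2019logic, abarca2021deontic}), replacing each moment by the product of its agents' choice-cells and carrying the epistemic and deontic relations through the copies; the finite-choice hypothesis and the finitary form of (IA) keep this well-defined. For (ii), I would reverse-engineer $\mathbf{Value}_{\mathcal{O}}$ and $\mathbf{Value}_{\mathcal{S}}$ from the canonical ought-relations: the normal-operator axioms (A1)--(A4) with (A2) ($\square\phi\to\odot_\alpha\phi$), (A3) (settledness) and (Oic) single out a distinguished family of ``optimal'' cells at each moment (those whose union is the $R_{\odot_\alpha}$-image), and dually (A5)--(A6), (SuN), (s.Oic), (s.Cl) single out the subjectively optimal cells across each $\sim_\alpha$-class; it then remains to choose real numbers so that the dominance $\preceq$ makes exactly the former optimal and $\preceq_s$ makes exactly the latter subjectively optimal. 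This is precisely where bi-valuedness pays off: because (ConSO) is gone, the objective and subjective designated families need not be mutually coherent, so the two representation problems can be solved independently with $\mathbf{Value}_{\mathcal{O}}$ and $\mathbf{Value}_{\mathcal{S}}$ --- whereas a single $\mathbf{Value}$ would have to realise both at once, which is exactly the obstruction that leaves completeness of $\Lambda_R$ open.

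It remains to install the intention topologies and close the argument. I would define $\tau_\alpha^{\langle m,h\rangle}$ by taking as a subbasis the truth sets, restricted to the \emph{ex ante} information set, of the formulas $\phi$ with $I_\alpha\phi$ in the point's maximal consistent set; seriality of $R_{I_\alpha}$ and the consistency encoded in the KD axioms (in the sense of \cite{abarca2022int}) yield the density condition (CI), while (KI) ($I_\alpha\phi\to\square K_\alpha I_\alpha\phi$) together with the S5 behaviour of $K_\alpha$ and $\Box$ forces the topology to be constant across each \emph{ex ante} information set (the frame condition (KI)), and (InN) guarantees nontriviality. With all frame conditions in place, the Truth Lemma goes through by induction on complexity --- the cases for $\Box, [\alpha], K_\alpha$ being routine for the constructed equivalence relations, the $I_\alpha$ case following from the choice of subbasis, and the $\odot_\alpha, \odot^{\mathcal{S}}_\alpha$ cases from the representation in (ii) --- and completeness follows in the usual way. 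The main obstacle throughout is the simultaneous management of independence of agency and the deontic representation inside one construction; everything else is bookkeeping governed by the S5/KD base and the confluence axioms.
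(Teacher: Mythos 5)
Your overall strategy coincides with the paper's: soundness is inherited schema by schema because each retained axiom's validity argument touches only one of the two value functions (the paper simply points back to its soundness proposition), and completeness is obtained by a canonical-model construction over Kripke-style worlds followed by a truth-preserving unravelling into branching-time frames. The paper makes this two-step structure explicit by first proving completeness with respect to an intermediate class of bi-valued Kripke-\emph{kios}-models and then transferring the result via an associated-frame construction together with a truth-preserving correspondence --- which is exactly your ``copying/unravelling'' step, borrowed from the same sources you cite.

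The one place where your sketch stops short of a proof is the step you yourself flag as the crux: ``choose real numbers so that the dominance $\preceq$ makes exactly the former optimal and $\preceq_s$ makes exactly the latter subjectively optimal.'' Stated as an abstract representation problem this is not obviously solvable, since the dominance orderings quantify over all states in $\mathbf{State}_\alpha^{m}$ and, in the subjective case, over all epistemically related moments, so an arbitrarily designated family of cells need not arise as the maxima of such an ordering. The paper discharges this concretely with two-valued indicator functions on canonical worlds: $\mathtt{Value}_{\mathcal{O}}(w)=1$ iff $w$ contains every $[\alpha]\phi$ with $\odot_\alpha[\alpha]\phi\in w$ (and $0$ otherwise), and analogously $\mathtt{Value}_{\mathcal{S}}(w)=1$ iff $w$ contains every $K_\alpha\phi$ with $\odot^{\mathcal{S}}_\alpha[\alpha]\phi\in w$. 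With these $0/1$ valuations, the deontic schemata --- in particular $(A4)$ and $(A6)$, which push the oughts inside $[\alpha]$ and $K_\alpha$ --- are what make the dominant cells coincide with the canonical ought-cells, and the mutual independence of the two valuations is what lets the objective and subjective truth-lemma cases be settled separately; this is the precise sense in which dropping $(ConSO)$ ``pays off,'' as you put it. Without some such explicit construction, the $\odot_\alpha$ and $\odot^{\mathcal{S}}_\alpha$ cases of your Truth Lemma remain open, so you should either import this indicator-function device or supply an alternative realization argument.
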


\section{Conclusion}

This paper built a formal theory of responsibility by means of stit-theoretic models and languages that were designed to explore the interplay between the following components of responsibility: agency, knowledge, beliefs, intentions, and obligations. Said models were integrated into a framework that is rich enough to provide logic-based characterizations for different instances of three categories of responsibility: causal, informational, and motivational responsibility. 

The developed theory belongs to a relatively recent tradition in the philosophical literature, that seeks to formalize responsibility allocation by means of models of agency and logic-based languages (see, for instance, \cite{de2010logic}, \cite{lorini2014logical}, \cite{alechina2017causality}, \cite{naumov2019blameworthiness}, \cite{naumov2020epistemic}, and \cite{baier2021game}). Most of these frameworks characterize different forms of responsibility as combinations of causal agency, knowledge, and the principle of alternate possibilities. The novelty of the present approach, then, lies in the introduction of intentionality and ought-to-do's. Such an introduction gives rise to a taxonomy that distinguishes various kinds of responsibility and blameworthiness/praiseworthiness in a methodical, meticulous way. Interesting directions for future work, then, involve extending these models with beliefs and rational decision-making, group notions (coalitions, group knowledge \& belief,  collective intentionality, collective responsibility), temporal modalities, and long-term strategies, for instance. As for the technical aspects of the formal theory, an important directions for future work involve checking whether the logic is decidable, checking for the complexity of its satisfiability problem, and figuring out its applicability for implementation.\footnote{Implementing logics of responsibility might prove relevant in the design, formal verification, and explainability of ethical AI (see, for instance, \cite{calegari2020logic}).}

\bibliographystyle{eptcs}
\bibliography{generic}
\begin{subappendices}

\setcounter{section}{0}
\renewcommand{\thesection}{\Alph{section}}%
\setcounter{theorem}{0}
\renewcommand{\thetheorem}{\thesection.\arabic{theorem}}
\renewcommand{\theproposition}{\thesection.\arabic{theorem}}
\renewcommand{\thelemma}{\thesection.\arabic{theorem}}
\renewcommand{\thedefinition}{\thesection.\arabic{theorem}}
\renewcommand{\theobservation}{\thesection.\arabic{theorem}}
\renewcommand{\theremark}{\thesection.\arabic{theorem}}
\renewcommand{\thecorollary}{\thesection.\arabic{theorem}}

\section{Metalogic Results for \emph{IEAUST}}\label{metalogicres}

\subsection{Soundness}

\begin{proposition}[Soundness of ${\Lambda_R}$] \label{soundres} The system ${\Lambda_R}$ (Definition~\ref{axiomres}) is sound with respect to the class of \emph{kiobt}-models. 
\end{proposition}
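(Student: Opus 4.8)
The plan is to proceed by the standard induction on the length of derivations in $\Lambda_R$. The inductive step for the rules of inference is routine: \emph{Modus Ponens} and Substitution preserve validity by the usual arguments, and Necessitation for each modal operator preserves validity because, in every \emph{kiobt}-model, a formula true at all indices is in particular true at all indices accessible via the relevant accessibility relation (and, for $I_\alpha$, is witnessed by the whole information set, which is always a non-empty open set). The substance of the proof therefore reduces to checking that each axiom schema is valid on the class of \emph{kiobt}-models, and I would organize this by grouping the schemata according to the semantic constraint that underwrites each one.

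First I would dispatch the easy groups. The classical tautologies are immediate. The $\mathbf{S5}$ schemata for $\Box$, $[\alpha]$, and $K_\alpha$ follow from the standard Kripke correspondence, since each of these operators is interpreted over an equivalence relation (respectively, sharing a moment through $H_m$, the choice-cell partition $\mathbf{Choice}^m_\alpha(h)$, and the indistinguishability relation $\sim_\alpha$). The $\mathbf{KD}$ schemata for $I_\alpha$ are where the topological reading first does work: the $\mathsf{K}$-axiom holds because if open sets $U\subseteq\|\phi\to\psi\|$ and $V\subseteq\|\phi\|$ witness the two antecedents, then $U\cap V$ is open and, crucially, non-empty by $(\mathtt{CI})$, whence $U\cap V\subseteq\|\psi\|$ witnesses $I_\alpha\psi$; and the $\mathsf{D}$-axiom holds because two witnessing opens for $\phi$ and $\neg\phi$ would again intersect by $(\mathtt{CI})$, contradicting $\|\phi\|\cap\|\neg\phi\|=\emptyset$. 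The two basic-stit schemata are handled by their defining constraints: $(SET)$ by $\mathbf{Choice}^m_\alpha(h)\subseteq H_m$, and $(IA)$ by independence of agency, the only combinatorial point being that the non-empty intersection of the selected choice-cells supplies the history witnessing the consequent. The knowledge-interaction schemata $(OAC)$ and $(Unif\text{-}H)$ translate directly into the constraints of the same names, while the intentionality schemata follow from, respectively, the fact that the whole information set is open and that, via $(\mathtt{Unif\text{-}H})$, $\square K_\alpha\phi$ forces $\|\phi\|$ to contain it (for $(InN)$), and the constancy condition $(\mathtt{KI})$ fixing the topology across an \emph{ex ante} information set (for $(KI)$).

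The real work lies in the deontic schemata, and this is where I expect the main obstacles. For the objective oughts I would unfold the definitions of $\preceq$ and $\mathbf{Optimal}^m_\alpha$: $(A1)$ is the $\mathsf{K}$-axiom and is routine; $(A2)$ holds because $\Box\phi$ holds throughout every action, \emph{a fortiori} the optimal ones; $(A4)$ holds because each optimal action is itself a choice-cell, so $\odot_\alpha\phi$ already yields $\phi$ throughout that cell, which is exactly what $[\alpha]\phi$ demands there; and $(A3)$ holds because $\mathbf{Optimal}^m_\alpha$ and the truth of $\phi$ on its cells depend only on the moment $m$, making $\odot_\alpha$ moment-determined and hence settled. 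The delicate point among these is $(Oic)$, namely $\odot_\alpha\phi\to\Diamond[\alpha]\phi$: its validity requires $\mathbf{Optimal}^m_\alpha\neq\emptyset$, which is exactly where the restriction to \emph{finite-choice} models is needed, since a finite set pre-ordered by $\preceq$ is guaranteed to have $\prec$-maximal elements. The subjective schemata $(A5)$, $(A6)$, $(SuN)$, $(s.Oic)$, and $(s.Cl)$ run in parallel but over $\preceq_s$, $\mathbf{SOptimal}^m_\alpha$, and the epistemic clusters $[L]^m_\alpha$; here $(s.Cl)$ uses $(\mathtt{KI})$-style invariance across $\sim_\alpha$, while $(s.Oic)$ again leans on finite-choice for non-emptiness.

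Finally, the schema I expect to be the genuine crux is $(ConSO)$, namely $\odot^{\mathcal{S}}_\alpha\phi\to\neg\odot_\alpha\neg\phi$. This is the only axiom that links the subjective and objective orderings, and its soundness rests essentially on the fact that in a \emph{kiobt}-model \emph{both} $\preceq$ and $\preceq_s$ are induced by the single utility function $\mathbf{Value}$. I would argue that if $\phi$ holds throughout every subjectively optimal action then $\neg\phi$ cannot hold throughout every objectively optimal action, by tracing how a subjectively optimal cell relates, through its epistemic clusters and the common value assignment, to the objective dominance at $m$. That this argument truly depends on the single-valuation assumption is confirmed by the fact that decoupling the two utilities in the bi-valued models (Definition~\ref{multikb}) is precisely what forces $(ConSO)$ to be dropped; so this step is both the technical high point of the soundness proof and the hinge on which the later completeness strategy turns.
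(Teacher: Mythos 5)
Your proposal is correct and follows essentially the same route as the paper: a validity check of each axiom schema against the frame condition that underwrites it (with the $\mathbf{S5}$/stit/objective-ought groups dispatched as standard), the detailed work concentrated on the knowledge and subjective-ought interaction schemata, and $(ConSO)$ identified as the crux and justified exactly as the paper does, via the observation that the single $\mathbf{Value}$ function makes subjective dominance imply objective dominance (the paper states this as the explicit lemma $L\preceq_s L'\Rightarrow L\preceq L'$, using reflexivity of $\sim_\alpha$ to get $L\subseteq[L]^m_\alpha$). Your added remarks on where finite choice is needed for $(Oic)$ and $(s.Oic)$ and on how $(\mathtt{CI})$ yields the $\mathbf{KD}$ schemata for $I_\alpha$ are consistent with, and slightly more explicit than, the paper's citations to prior work.
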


\begin{proof}
The proof of soundness is routine: the validity of the $\mathbf{S5}$ schemata for $\square$ and $[\alpha]$, as well as that of $(SET)$ and $(IA)$, is standard from \cite{xu1994decidability}; the validity of the $\mathbf{S5}$ schemata for $K_\alpha$ is standard from epistemic logic; the validity of schemata $(A1)$--$(A4)$, as well as that of $(Oic)$, is standard from \cite{MUR}; the validity of the $\mathbf{KD}$ schemata for $I_\alpha$, as well as that of $(InN)$, follows from Definitions~\ref{kiobtframes} and \ref{evares}; and the validity of $(KI)$ follows from frame condition $(\mathtt{KI})$; and the validity of schemata $(OAC)$, $(Unif-H)$, $(A5)$ and $(A6)$, as well as that of $(SuN)$, $(s.Oic)$, $(s.Cl)$, and $(ConSO)$ can be shown as follows:

\item To see that $\mathcal{M}\models (OAC)$, take $\left\langle m,h\right\rangle$ such that $\mathcal{M},\left\langle m,h\right\rangle\models K_\alpha \varphi$. Take $h'\in \mathbf{Choice}_\alpha^{m}(h)$. Frame condition $(\mathtt{OAC})$ implies that $\left\langle m,h\right\rangle \sim_\alpha \left\langle m,h'\right\rangle$. The assumption that  $\mathcal{M},\left\langle m,h\right\rangle\models K_\alpha \varphi$ then implies that $\mathcal{M},\left\langle m,h'\right\rangle\models \varphi$. Therefore, for any $h'\in \mathbf{Choice}_\alpha^{m}(h)$, $\mathcal{M}, \left\langle m,h'\right\rangle\models \varphi$, which implies that $\mathcal{M},\left\langle m,h\right\rangle\models [\alpha ] \varphi$.

\item To see that $\mathcal{M}\models (Unif-H)$, take $\left\langle m,h\right\rangle$ such that $\mathcal{M},\left\langle m,h\right\rangle\models \Diamond K_\alpha \varphi$. Let $\left\langle m',h'\right\rangle$  be an index such that $\left\langle m,h\right\rangle\sim_\alpha \left\langle m',h'\right\rangle$. We want to show that $\mathcal{M},\left\langle m',h'\right\rangle\models  \Diamond\varphi$.  The fact that $\mathcal{M},\left\langle m,h\right\rangle\models \Diamond K_\alpha \varphi$ implies that there exists $h_*\in H_m$ such that $(\star)$ $\mathcal{M},\left\langle m,h_*\right\rangle\models K_\alpha \varphi$. Frame condition $(\mathtt{Unif-H})$ implies that there exists $h_*'\in H_{m'}$ such that $\left\langle m,h_*\right\rangle\sim_\alpha \left\langle m',h_*'\right\rangle$. With $(\star)$, this last fact implies that $\mathcal{M},\left\langle m',h_*'\right\rangle\models  \varphi$, which in turn implies that $\mathcal{M},\left\langle m',h'\right\rangle\models  \Diamond\varphi$. Therefore, $\mathcal{M},\left\langle m,h\right\rangle\models K_\alpha \Diamond\varphi$.

\item To see that $\mathcal{M}\models (A6)$, take $\left\langle m,h\right\rangle$ such that $\mathcal{M},\left\langle m,h\right\rangle\models \odot^{\mathcal{S}}_\alpha  \varphi$. We want to show that, for every $L\in \mathbf{Choice}^{m}_\alpha$ such that $[L]^{m'}\not\subseteq\left|K_\alpha\varphi\right|^{m'}$ (for some $m'$ such that $m\sim_\alpha m'$), there is $L'\in \mathbf{Choice}_\alpha^{m}$ such that $L\prec_s L'$ and, if  $L''=L'$ or $L'\preceq_s L''$, then  $[L'']^{m''}_\alpha\subseteq \left|K_\alpha\varphi\right|^{m''}$ for every $m''$ such that $m\sim_\alpha m''$.  Take $L\in \mathbf{Choice}^{m}_\alpha$ such that there exists $m'\in M$ such that $m\sim_\alpha m'$ and $[L]^{m'}\not\subseteq|K_\alpha\varphi|^{m'}$. This implies that $[L]^{m'''}\not\subseteq\left|\phi\right|^{m'''}$ for some $m'''$ such that $m'\sim_\alpha m'''$. Now, transitivity of $\sim_\alpha$ implies that $m\sim_\alpha m'''$. Therefore, the assumption that $\mathcal{M},\left\langle m,h\right\rangle\models \odot^{\mathcal{S}}_\alpha  \varphi$ implies that there is $L'\in \mathbf{Choice}_\alpha^{m}$ such that $L\prec_s L'$ and, if  $L''=L'$ or $L'\preceq_s L''$, then  $[L'']^{m''}_\alpha\subseteq \left|\varphi\right|^{m''}$ for every $m''$ such that $m\sim_\alpha m''$. By definition of epistemic clusters and transitivity of $\sim_\alpha$, this last clause implies that if  $L''=L'$ or $L'\preceq_s L''$ then  $[L'']^{m''}_\alpha\subseteq \left|K_\alpha\varphi\right|^{m''}$ for every $m''$ such that $m\sim_\alpha m''$. Thus, $L'$ attests to the fact that $\mathcal{M},\left\langle m,h\right\rangle\models \odot^{\mathcal{S}}_\alpha  \left(K_\alpha\varphi\right)$.

\item To see that $\mathcal{M}\models (SuN)$, take $\left\langle m,h\right\rangle$ such that $\mathcal{M},\left\langle m,h\right\rangle\models K_\alpha \square \varphi$.  Take $L\in \mathbf{Choice}^{m}_\alpha$, and let $m'\in M$ be such that $m\sim_\alpha m'$ (which means that there exist $j\in H_{m}$, $j'\in H_{m'}$ such that $\left\langle m,j\right\rangle \sim_\alpha \left\langle m',j'\right\rangle$). Condition ($\mathtt{Unif-H}$) ensures that there exists $h'\in H_{m'}$ such that $\left\langle m,h\right\rangle \sim_\alpha \left\langle m',h'\right\rangle$. The assumption that  $\mathcal{M},\left\langle m,h\right\rangle\models K_\alpha \square \varphi$  then implies that $\mathcal{M},\left\langle m',h'\right\rangle\models \square \varphi$. Thus, for any $h''\in [L]^{m'}_\alpha$, the fact that $h''\in H_{m'}$ yields that $\mathcal{M},\left\langle m',h''\right\rangle\models \varphi$. Therefore, for all $L\in \mathbf{Choice}^{m}_\alpha$ and $m'$ such that $m\sim_\alpha m'$, $[L]^{m'}_\alpha\subseteq \left|\phi\right|^{m'}$, which vacuously implies that $\mathcal{M},\left\langle m,h\right\rangle\models \odot^{\mathcal{S}}_\alpha  \varphi$.

\item To see that $\mathcal{M}\models (s.Oic)$, take $\left\langle m,h\right\rangle$ such that $\mathcal{M},\left\langle m,h\right\rangle\models \odot^{\mathcal{S}}_\alpha  \varphi$. This implies that there exists $L\subseteq H_{m}$ such that $[L]^{m''}_\alpha \subseteq \left|\phi\right|^{m''}$ for every $m''\in M$ such that $m\sim_\alpha m''$. Since $\sim_\alpha$ is reflexive, $[L]^{m}_\alpha \subseteq \left|\phi\right|^{m}$. Now, take $h_{0}\in L$. Let $\left\langle m', h' \right\rangle $ be an index such that $\left\langle m, h_{0}\right\rangle \sim_\alpha \left\langle m',h'\right\rangle$. From the definition of epistemic clusters, $h'\in [L]^{m'}_\alpha$, so the fact that $[L]^{m'}_\alpha \subseteq \left|\phi\right|^{m'}$ implies that $\mathcal{M},\left\langle m',h'\right\rangle \models \varphi$.  Therefore, history $h_{0}\in H_{m}$ is such that, for every  $\left\langle m', h'\right\rangle$ with $\left\langle m, h_{0}\right\rangle \sim_\alpha \left\langle m',h'\right\rangle$, $\mathcal{M},\left\langle m',h'\right\rangle\models \varphi$. This means that $\mathcal{M},\left\langle m,h_{0}\right\rangle \models K_\alpha \varphi$, which implies that $\mathcal{M},\left\langle m,h\right\rangle \models \Diamond K_\alpha \varphi$.

\item To see that $\mathcal{M}\models(s.Cl)$, take $\left\langle m_*,h_*\right\rangle$  such that $\mathcal{M},\left\langle m_*,h_*\right\rangle\models \odot^{\mathcal{S}}_\alpha  \varphi$. Let $\left\langle m,j\right\rangle$ be such that $\left\langle m_*,h_*\right\rangle\sim_\alpha \left\langle m,j\right\rangle$. Take $h\in H_m$. We want to show that, for every $L\in \mathbf{Choice}^{m}_\alpha$ such that $[L]^{m'}\not\subseteq\left|\phi\right|^{m'}$ (for some $m'$ such that $m\sim_\alpha m'$), there is $L'\in \mathbf{Choice}_\alpha^{m}$ such that $L\prec_s L'$ and, if  $L''=L'$ or $L'\preceq_s L''$, then  $[L'']^{m''}_\alpha\subseteq \left|\phi\right|^{m''}$ for every $m''$ such that $m\sim_\alpha m''$.  Take $L\in \mathbf{Choice}^{m}_\alpha$ such that there exists $m'\in M$ such that $m\sim_\alpha m'$ and $[L]^{m'}\not\subseteq\left|\phi\right|^{m'}$. Let $N_L$ be an action in $\mathbf{Choice}^{m_*}_\alpha$ such that $N_L\subseteq [L]^{m_*}_\alpha$, where we know that such an action exists in virtue of $(\mathtt{Unif-H})$ and $(\mathtt{OAC})$. Notice that transitivity of $\sim_\alpha$ entails that $\left[N_L\right]^{o}_\alpha = [L]^{o}_\alpha$ for any moment $o$, so that $\left[N_L\right]^{m'}_\alpha\not\subseteq\left|\phi\right|^{m'}$. Since  $\mathcal{M},\left\langle m_*,h_*\right\rangle\models \odot^{\mathcal{S}}_\alpha  \varphi$,  there must exist $N\in \mathbf{Choice}^{m_*}_\alpha$ such that $N_L\prec_s N$ and, if  $N'=N $ or $N\preceq_s N'$, then $[N']^{m''}_\alpha\subseteq \left|\phi\right|^{m''}$ for every  $m''$ such that $m_*\sim_\alpha m''$. Now, let $L_{N}$ be an action in $\mathbf{Choice}^{m}_\alpha$ such that $L_N\subseteq [N]^{m}_\alpha$ (which implies that $\left[L_N\right]^{o}_\alpha = [N]^{o}_\alpha$ for any moment $o$). We claim that $L\prec_s L_N$, and show our claim with the following argument: let $m''\in M$ be such that $m\sim_\alpha m''$, and take $S\in \mathbf{State}^{m''}_\alpha$; on the one hand, ($\star$) $[L]^{m''}_\alpha\cap S =\left[N_L\right]^{m''}_\alpha\cap S\leq [N]^{m''}_\alpha\cap S=\left[L_N\right]^{m''}_\alpha\cap S$; on the other hand, we know that there exist a moment $m'''$ and a state $S_0\in \mathbf{State}^{m'''}_\alpha$ such that $m_*\sim_\alpha m'''$ and such that $[N]^{m'''}_\alpha\cap S_0\not\leq\left[N_L\right]^{m'''}_\alpha\cap S_0$; therefore, $(\star\star)$  $\left[L_N\right]^{m'''}_\alpha\cap S_0=[N]^{m'''}_\alpha\cap S_0\not\leq\left[N_L\right]^{m'''}_\alpha\cap S_0=[L]^{m'''}_\alpha\cap S_0$. Together, $(\star)$ and $(\star\star)$ entail that $L\prec_s L_N$, proving our claim. Now, let $L''\in \mathbf{Choice}^m_\alpha$ be such that $L''=L_N$ or $L_N\preceq_sL''$. If $L'' =L_N$, then $[L'']^{m''}_\alpha=[N]^{m''}_\alpha\subseteq \left|\phi\right|^{m''}$ for every $m''$ such that $m\sim_\alpha m''$. If $L_N\prec_sL''$, then an argument similar to the one used to show that our claim was true renders that there is an action $N_{L''}\in \mathbf{Choice}^{m_*}_\alpha$ such that $N_{L''}\subseteq [L'']^{m_*}_\alpha$ and $N\preceq_s N_{L''}$. Thus, $[L'']^{m''}_\alpha=[N_{L''}]^{m''}_\alpha\subseteq \left|\phi\right|^{m''}$. With this, we have shown that $\mathcal{M},\left\langle m,h\right\rangle\models \odot^{\mathcal{S}}_\alpha \varphi$ for every $h\in H_m$, so that $\mathcal{M},\left\langle m,j\right\rangle\models \square\odot^{\mathcal{S}}_\alpha  \varphi$. But $\left\langle m,j\right\rangle$ was an arbitrary index such that $\left\langle m_*,h_*\right\rangle\sim_\alpha \left\langle m,j\right\rangle$. Thus, $\mathcal{M},\left\langle m_*,h_*\right\rangle\models K_\alpha\square\odot^{\mathcal{S}}_\alpha  \varphi$.

 \item Let us show that $\mathcal{M}\models(ConSO)$. First of all, let us show that, for all $L, L'\in\mathbf{Choice}_\alpha^m$, if $L\preceq_s L'$, then $L\preceq L'$. Take $L, L'\in\mathbf{Choice}_\alpha^m$. If $L\preceq_s L'$, then, for each $m'$ such that $m\sim_\alpha m'$, $\mathbf{Value}(h)\leq \mathbf{Value}(h')$ for every $h\in[L]^{m'}_\alpha, h'\in [L']^{m'}_\alpha$. Reflexivity of $\sim_\alpha$ implies both that $m\sim_\alpha m'$ and that  $L\subseteq [L]^{m}_\alpha$ and $L'\subseteq [L']^{m}_\alpha$. Therefore, for all $h''\in L$ and $h'''\in L'$,  $\mathbf{Value}(h'')\leq \mathbf{Value}(h''')$, which implies that $L\preceq L'$.   

Now, let $\langle m, h\rangle$ be an index. Assume for a contradiction that ($\star$) $\mathcal{M}, \left\langle m, h\right\rangle \models \odot_\alpha^ {\mathcal{S}} \varphi$ and that ($\star\star$) $\mathcal{M}, \left\langle m, h\right\rangle \models \odot_\alpha  \lnot \varphi$. On the one hand, assumption $(\star)$ implies that there is $L_*\in\mathbf{Choice}_\alpha^m$ such that $L_*\subseteq \left|\phi\right|^m$. Thus, assumption ($\star\star$) yields that there is $L_*'\in \mathbf{Choice}_\alpha^m$ such that $L_*\prec L_*'$ and, if $N=L_*'$ or $L_*'\preceq N$, then $N\subseteq |\lnot\phi|^m$. In particular,  $L_*'\subseteq |\lnot\phi|^m$. Assumption ($\star$) then implies that there is  $L_*''\in\mathbf{Choice}_\alpha^m$ such that $L_*'\prec_s L_*''$ and, if $N=L_*''$ or $L_*''\preceq_s N$, then $N\subseteq [N]^m_\alpha\subseteq \left|\phi\right|^m$. In particular, $L_*''\subseteq \left|\phi\right|^m$. On the other hand, by the first observation in the proof, the fact that $L_*'\prec_s L_*''$ implies that $L_*'\preceq L_*''$, so that assumption ($\star\star$) yields that $L_*''\subseteq |\lnot\phi|^m$, which contradicts the previously shown fact that $L_*''\subseteq \left|\phi\right|^m$. Thus, $\mathcal{M}, \left\langle m, h\right\rangle \models \odot_\alpha^ {\mathcal{S}} \varphi \to  \lnot\odot_\alpha  \lnot \varphi$ for every index $\left\langle m, h\right\rangle$, so that $\odot_\alpha^ {\mathcal{S}} \varphi \to  \lnot\odot_\alpha  \lnot \varphi$ is indeed valid. 

\item It is clear that the rules of inference \emph{Modus Ponens}, Substitution, and Necessitation for the modal operators all preserve validity. 
\end{proof}

\subsection{Completeness}

As mentioned in the main body of the paper, whether $\Lambda_R$ is complete with respect to the class of \emph{kiobt}-models is still an open problem. However, the proof system $\Lambda_{R}'$---obtained from $\Lambda_{R}$ by eliminating $(ConSO)$ in Definition~\ref{axiomres1}---is sound and complete with respect to the class of bi-valued \emph{kiobt}-models (Definition~\ref{multikb}). Soundness follows from Proposition~\ref{soundres}, and the proof of completeness is obtained by integrating the proofs of completeness in \cite{abarca2019logic} and \cite{abarca2022int}. More precisely, the proof of completeness will be sketched below as a two-step process. First, I introduce a Kripke semantics for \emph{bi-valued} \emph{IEAUST}, where the formulas of $\mathcal{L}_{\textsf{R}}$ are evaluated on bi-valued Kripke-\emph{kios}-models (Definition~\ref{modelsKripkeres}). Completeness of ${\Lambda_R}$' with respect to the class of these structures is shown  via the well-known technique of canonical models. Secondly, a truth-preserving correspondence between bi-valued Kripke-\emph{kios}-models and a sub-class of bi-valued \emph{kiobt}-models is used to prove completeness with respect to bi-valued \emph{kiobt}-models via completeness with respect to bi-valued Kripke-\emph{kios}-models.

A Kripke semantics for \emph{IEAUST} is defined as follows:

\begin{definition}[Bi-valued Kripke-\emph{kios}-frames  \& models]
\label{modelsKripkeres}
A tuple 
\[\left\langle W, Ags, R_\square,\mathbf{\mathbf{Choice}}, \left\{\approx_\alpha\right\}_{\alpha\in Ags}, \left\{R_\alpha ^I\right\}_{\alpha\in Ags},  \mathtt{Value}_{\mathcal{O}}, \mathtt{Value}_{\mathcal{S}} \right\rangle\] is called a \emph{bi-valued} Kripke-\emph{kios}-frame iff
\begin{itemize}

\item $W$ is a set of possible worlds. $R_\square$ is an equivalence relation over $W$. For $w\in W$, the class of $w$ under $R_\square$ is denoted by $\overline{w}$. $\mathtt{Choice}$ is a function that assigns to each $\alpha\in Ags$ and $\square$-class $\overline{w}$ a partition $\mathtt{Choice}_\alpha^{\overline{w}}$ of $\overline{w}$ given by an equivalence relation denoted by $R_\alpha^{\overline{w}}$. $\mathtt{Choice}$ must satisfy the following constraint:
\begin{itemize}
\item $\mathtt{(IA)_K}$ For all $w\in W$, each function $s:Ags\to 2^{\overline{w}}$ that maps $\alpha$ to a member of $\mathtt{Choice}^{\overline{w}}_\alpha$ is such that $\bigcap_{\alpha \in Ags} s(\alpha) \neq \emptyset$ (where the set of all functions $s$ that map $\alpha$ to a member of $\mathtt{Choice}^{\overline{w}}_\alpha$ is denoted by $\mathtt{Select}^{\overline{w}}$). 

\end{itemize}
For $\alpha\in Ags$, $w\in W$, and $v\in \overline{w}$, the class of $v$ in the partition $\mathtt{Choice}^{\overline{w}}_\alpha$ is denoted by $\mathtt{Choice}^{\overline{w}}_\alpha(v)$. Now, for $\beta\in Ags$ and $w\in W$, $\mathtt{State}_\beta^{\overline{w}}:=\left\{S\subseteq \overline{w} ; S=\bigcap_{\alpha \in Ags-\left\{\beta\right\}} s(\alpha), \mbox{ for } s\in \mathtt{Select}^{\overline{w}}\right\}$, where $\mathtt{Select}^{\overline{w}}$ denotes the set of all selection functions at $\overline{w}$ (i.e., functions that assign to each $\alpha$ a member of $\mathtt{Choice}^{\overline{w}}_\alpha$).


\sloppy

\item For all $\alpha \in Ags$, $\approx_\alpha$ is an (epistemic) equivalence relation on $W$. The following conditions must be satisfied: 
\begin{itemize}
\item $\mathtt{(OAC)_{\mathtt{K}}}$ For all $\alpha\in Ags$, $w\in W$, and $v\in \overline{w}$, $v\approx_\alpha u$ for every $u\in \mathtt{Choice}^{\overline{w}}_\alpha(v)$.
\item $\mathtt{(Unif-H)_{\mathtt{K}}}$ For all $\alpha\in Ags$, if $v, u\in W$ are such that  $v\approx_\alpha u$, then for all $v'\in \overline{v}$ there exists $u'\in \overline{u}$ such that $v'\approx_\alpha u'$. 
\end{itemize}

For $\alpha\in Ags$ and $w\in W$,  $\alpha$'s \emph{\emph{ex ante} information set at $w$} is defined as $\pi_\alpha^\square[w]:=\left\{v; w\approx_\alpha \circ R_\square v \right\}$, which by frame condition $(\mathtt{Unif-H})_\mathtt{K}$ coincides with the set $\left\{v; w R_\square\circ \approx_\alpha  v \right\}$. To clarify, $(\mathtt{Unif-H})_\mathtt{K}$ implies that $R_\square\circ \approx_\alpha= \approx_\alpha \circ R_\square$. Thus, $\approx_\alpha \circ R_\square$ is an equivalence relation such that $\pi_\alpha^\square[w]=\pi_\alpha^\square[v]$ for every $w,v \in W$ such that $w\approx_\alpha \circ R_\square v$.

\item For $\alpha\in Ags$, $R_\alpha^I$ is a serial, transitive, and euclidean relation on $W$ such that $R_\alpha^I\subseteq \approx_\alpha \circ R_\square$ and such that the following condition is satisfied: 
\begin{itemize}
    \item $\mathtt{(Den)_K}$ For all $v, u\in W$ such that $v \approx_\alpha \circ R_\square  u$, there exists $z\in W$ such that $vR_\alpha^I z$ and $uR_\alpha^I z$.
\end{itemize}

For $\alpha\in Ags$, $R_\alpha^{I+}$ denotes the reflexive closure of $R_\alpha^I$. For $w\in W$, $w\uparrow_{R_\alpha^{I+}}$ denotes the set $\left\{v\in W; w R_\alpha^{I+} v\right\}$.

For $w, v\in W$, I write $\overline{w} \approx_\alpha \overline{v}$ iff there exist $w'\in\overline{w}$ and $v'\in\overline{v}$ such that $w'\approx_\alpha v'$. For $w, v\in W$ such that $\overline{w} \approx_\alpha \overline{v}$ and $L\in\mathtt{Choice}^{\overline{w}}_\alpha$, $L$'s epistemic cluster at $\overline{v}$ is the set $\sembracki{L}_\alpha^{\overline{v}}:=\left\{u\in {\overline{v}} ; \mbox{ there is } o\in L \ \textnormal{ such that }\ o \approx_\alpha u \right\}.$

\fussy

\item $\mathtt{Value}_{\mathcal{O}}$ and $\mathtt{Value}_{\mathcal{S}}$ are functions that independently assign to each world $w\in W$ a real number. 

These functions are used to define an objective ordering $\preceq$ and a subjective ordering $\preceq_s$ of choices. Formally, for $\alpha\in Ags$ and $w\in W$, one first defines
two general orderings $\leq$ and $\leq_s$ on $2^{W}$ by the rules:
$X\leq Y$ iff $\mathtt{Value}_{\mathcal{O}}(w) \leq \mathtt{Value}_{\mathcal{O}}(w')$ for all $w\in X$ and  $w'\in Y$; and $X\leq_s Y$ iff $\mathtt{Value}_{\mathcal{S}}(w) \leq \mathtt{Value}_{\mathcal{S}}(w')$ for all $w\in X$ and  $w'\in Y$. An objective dominance ordering $\preceq$ is then defined on $\mathtt{Choice}_\alpha^{\overline{w}}$ by the rule: $L\preceq L'$ iff  $L\cap S \leq L'\cap S$ for every $S\in \mathtt{State}_\alpha^{\overline{w}}$.  In turn, a subjective dominance ordering $\preceq_s$ is then defined on $\mathtt{Choice}_\alpha^{\overline{w}}$ by the rule: $L\preceq_s L'$ iff  $\sembracki{L}^{\overline{v}}_\alpha\cap S \leq_s \sembracki{L'}^{\overline{v}}_\alpha\cap S$ for every $v$ such that $w\approx_\alpha v$ and every $S\in \mathtt{State}_\alpha^{\overline{v}}$. I write  $L\prec L'$ iff $L\preceq L'$ and $L'\npreceq L$, and I write $L\prec_s L'$ iff $L\preceq_s L'$ and $L'\npreceq_s L$, so that
$\mathtt{Optimal}_\alpha^{\overline{w}}:=\left\{L \in \mathtt{Choice}^{\overline{w}}_\alpha ; \textnormal{ there is no } L' \in \mathtt{Choice}^{\overline{w}}_\alpha \textnormal{ s. t. }  L\prec L'\right\}$ 
and $\mathtt{SOptimal}_\alpha^{\overline{w}}:=\left\{L \in \mathtt{Choice}^{\overline{w}}_\alpha ; \textnormal{ there is no } L' \in \mathtt{Choice}^{\overline{w}}_\alpha \textnormal{ s. t. }  L\prec_s L'\right\}$. 
\end{itemize} 
A Kripke-\emph{kios}-model $\mathcal {M}$ consists of the tuple that results from adding a valuation function $\mathcal{V}$ to a Kripke-\emph{kios}-frame, where $\mathcal{V}: P\to 
2^{W}$ assigns to each atomic proposition a set of worlds (recall that $P$ is the set of propositions in $\mathcal{L}_{\textsf{R}}$).
\end{definition}

 Kripke-\emph{kios}-models allow us to evaluate the formulas of $\mathcal L_{\textsf{R}}$ with semantics that are analogous to the ones provided for \emph{kiobt}-models:
 
\begin{definition}[Evaluation rules on Kripke models]
\label{evareskripke}
    Let $\mathcal{M}$ be a Kripke-\emph{kios}-model, the semantics on $\mathcal {M}$ for the formulas of $\mathcal {L}_{\textsf{KO}}$ are defined recursively by the following truth conditions, evaluated at  world $w$: 
\[ \begin{array}{lll}
\mathcal{M}, w \models p & \mbox{iff} &  w \in \mathcal{V}(p) \\

\mathcal{M}, w \models \neg \phi & \mbox{iff} & \mathcal{M}, w \not\models \phi \\

\mathcal{M}, w \models \phi \wedge \psi & \mbox{iff} & \mathcal{M}, w \models \phi \mbox{ and } \mathcal{M}, w \models \psi \\

\mathcal{M}, w \models \Box \phi &
\mbox{iff} & \mbox{for each } v\in \overline{w},\mathcal{M}, v \models \phi \\

\mathcal{M}, w \models [\alpha]
\phi & \mbox{iff} & \mbox{for each } v\in \mathtt{Choice}^{\overline{w}}_\alpha(w), \mathcal{M}, v \models \phi\\

\mathcal{M}, w \models K_{\alpha} \phi &
\mbox{iff} & \mbox{for each } v \mbox{ s. t. }   w \approx_{\alpha}v, \mathcal{M}, v \models \phi\\

\mathcal{M}, w \models I_\alpha\phi &
\mbox{ iff } & \mbox{there exists } x \in \pi_\alpha^\square[w]\mbox{ s. t. }  x\uparrow_{R_\alpha^{I+}} \subseteq |\phi|\\

\mathcal{M},w \models \odot_\alpha  \varphi & \mbox{iff} &\mbox{for all } L\in \mathtt{Choice}^{\overline{w}}_\alpha \mbox{ s. t. } \mathcal{M},v\not\models\varphi \mbox{ for some } v\in L, \mbox{ there is}\\ &&  L'\in \mathtt{Choice}_\alpha^{\overline{w}} \mbox{ s. t. } L\prec L'  \mbox{ and, if } L''=L' \mbox{ or } L'\preceq_s L'', \\ &&  \mbox{then } \mathcal{M},w'\models \varphi \mbox{ for every }  w'  \in L''_\alpha\\

\mathcal{M},w \models \odot^{\mathcal{S}}_\alpha  \varphi & \mbox{iff} &\mbox{for all } L\in \mathtt{Choice}^{\overline{w}}_\alpha \mbox{ s. t. } \mathcal{M},v\not\models\varphi \mbox{ for some } w' \mbox{ s. t. } w\approx_\alpha w' \\ && \mbox{and some } v\in \sembracki{L}^{w'}_\alpha, \mbox{ there is } L'\in \mathtt{Choice}_\alpha^{\overline{w}} \mbox{ s. t. } L\prec_s L' \\ &&  \mbox{and, if }  L''=L' \mbox{ or } L'\preceq_s L'', \mbox{then } \mathcal{M},w'''\models \varphi \mbox{ for every }  w''  \\&&   \mbox{s. t. } \overline{w}\approx_\alpha \overline{w''} \mbox{ and every } w'''\in \sembracki{L''}^{w''}_\alpha,
\end{array} \]
where I write $|\phi|$ to refer to the set $\left\{w\in W;\mathcal{M},w \models \phi\right\}$. Satisfiability, validity on a frame, and general validity are defined as usual. 
\end{definition}

A truth-preserving correspondence between Kripke-\emph{kios}-models and \emph{kiobt}-models is shown as follows: 

\begin{definition}[Associated \emph{kiobt}-frame] \label{treeres}

Let \[\mathcal{F}=\left\langle W, Ags, R_\square, \mathtt{Choice},  \left\{\mathtt{\approx}_\alpha\right\}_{\alpha\in Ags}, \left\{R_\alpha^I\right\}_{\alpha\in Ags}
, \mathtt{Value}_{\mathcal{O}}, \mathtt{Value}_{\mathcal{S}}\right\rangle\] be a bi-valued Kripke-\emph{kios}-frame.

Then $ \mathcal{F}^T:=\left\langle M_W,  \sqsubset, Ags, \mathbf{Choice},\left\{\sim_\alpha\right\}_{\alpha\in Ags}, \tau, \mathbf{Value}_{\mathcal{O}}, \mathbf{Value}_{\mathcal{S}} \right\rangle$ is called the bi-valued \emph{kiobt}-frame associated with $\mathcal{F}$ iff
\begin{itemize}
\item $M_W, \sqsubset, \mathbf{Choice}$,  $\left\{\sim_\alpha\right\}_{\alpha\in Ags}$, and $\tau$ are defined just as in Definition~11 in \cite{abarca2022int}. 

\item  $\mathbf{Value}_{\mathcal{O}}$ and $\mathbf{Value}_{\mathcal{S}}$ are defined by the following rules: for $h_v\in H$, $\mathbf{Value}_{\mathcal{O}}(h_v)=\mathtt{Value}_{\mathcal{O}}(v)$, and  $\mathbf{Value}_{\mathcal{S}}(h_v)=\mathtt{Value}_{\mathcal{S}}(v)$. 
\end{itemize}
\end{definition} 

\begin{proposition}\label{avion2res}
Let $\mathcal{F}$ be a bi-valued Kripke-\emph{kios}-frame. Then $\mathcal{F}^T$ is a bi-valued \emph{kiobt}-frame, indeed. 
\end{proposition}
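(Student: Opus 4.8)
The plan is to exploit the fact that Definition~\ref{multikb} adds almost nothing to Definition~\ref{kiobtframes}: a bi-valued \emph{kiobt}-frame is an ordinary \emph{kiobt}-frame skeleton $\langle M, \sqsubset, Ags, \mathbf{Choice}, \{\sim_\alpha\}_{\alpha\in Ags}, \tau\rangle$ together with two \emph{independent} real-valued functions $\mathbf{Value}_{\mathcal{O}}, \mathbf{Value}_{\mathcal{S}}$ on histories, and crucially no condition couples these functions to each other or to the rest of the structure. Accordingly I would split the verification into two parts and dispatch the bulk of it by citation.

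First, the skeleton. By Definition~\ref{treeres}, the components $M_W, \sqsubset, \mathbf{Choice}, \{\sim_\alpha\}_{\alpha\in Ags}$, and $\tau$ of $\mathcal{F}^T$ are constructed exactly as in Definition~11 of \cite{abarca2022int}. Since that reference already proves that this unraveling turns a Kripke-\emph{kios}-frame into a (single-valued) \emph{kiobt}-frame, every structural requirement of Definition~\ref{kiobtframes} is inherited verbatim: $\sqsubset$ is a strict partial order with no backward branching, $\mathbf{Choice}$ satisfies $(\mathtt{NC})$ and $(\mathtt{IA})$, each $\sim_\alpha$ satisfies $(\mathtt{OAC})$ and $(\mathtt{Unif-H})$, and each topology $\tau_\alpha^{\langle m, h\rangle}$ satisfies $(\mathtt{CI})$ and $(\mathtt{KI})$. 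I would simply invoke this, so that the only genuinely new content of the proposition concerns the two value functions.

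Second, the value functions. Definition~\ref{multikb} asks only that $\mathbf{Value}_{\mathcal{O}}$ and $\mathbf{Value}_{\mathcal{S}}$ be well-defined total functions $H \to \mathbb{R}$. They are defined in Definition~\ref{treeres} by $\mathbf{Value}_{\mathcal{O}}(h_v) = \mathtt{Value}_{\mathcal{O}}(v)$ and $\mathbf{Value}_{\mathcal{S}}(h_v) = \mathtt{Value}_{\mathcal{S}}(v)$, so the whole issue is the legitimacy of the indexing $v \mapsto h_v$ supplied by the construction of \cite{abarca2022int}. I would therefore recall the correspondence between worlds and histories established there and check two things: (i) totality---every history of $\mathcal{F}^T$ is of the form $h_v$ for some $v \in W$, so that each history receives a value; and (ii) single-valuedness---if $h_v = h_{v'}$ then $\mathtt{Value}_{\mathcal{O}}(v) = \mathtt{Value}_{\mathcal{O}}(v')$ and $\mathtt{Value}_{\mathcal{S}}(v) = \mathtt{Value}_{\mathcal{S}}(v')$, which holds outright if $v \mapsto h_v$ is a bijection $W \to H$, the form in which the correspondence is proved in \cite{abarca2022int}. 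Because $\mathtt{Value}_{\mathcal{O}}$ and $\mathtt{Value}_{\mathcal{S}}$ are real-valued and are pulled back independently, the two resulting functions are legitimate and mutually unconstrained, matching Definition~\ref{multikb}.

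The main obstacle is nothing deep, but it is precisely this well-definedness of $v \mapsto h_v$: one must be sure that the unraveling assigns to each world a unique history and exhausts all histories, since otherwise the pullbacks would fail to be total or single-valued. Everything else is an appeal to the already-established single-valued construction, and since the bi-valued variant imposes no interaction axiom (this being exactly why $(ConSO)$ was dropped), no further compatibility between $\mathbf{Value}_{\mathcal{O}}$ and $\mathbf{Value}_{\mathcal{S}}$ need be checked.
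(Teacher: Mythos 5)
Your proposal is correct and follows essentially the same route as the paper, whose entire proof is the one-line citation ``Follows from Proposition~2 in \cite{abarca2022int} and Definition~\ref{treeres}.'' You merely make explicit the (easy) extra check the paper leaves implicit---that the two value functions are well-defined total pullbacks along the world-to-history correspondence and are subject to no interaction condition---which is a faithful elaboration rather than a different argument.
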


\begin{proof}
Follows from Proposition~2 in \cite{abarca2022int} and Definition~\ref{treeres}.
\end{proof}

\begin{lemma}\label{kriptotreeres}
Let $\mathcal{M} $ be a bi-valued Kripke-\emph{kios}-model, and let $\mathcal{M}^T$ be its associated  bi-valued  \emph{kiobt}-model.  For all $\alpha\in Ags$, $w\in W$, and $L, N\in \mathtt{Choice}^{\overline{w}}_\alpha$, the following conditions hold:
\begin{enumerate}[(a)]
\item \label{itm:asu} $L\preceq N$ iff $L^T\preceq N^T$ and $L\prec N$ iff  $L^T\prec N^T$.
\item \label{itm:bsu} $L\preceq_s N$ iff $L^T\preceq_s N^T$ and $L\prec_s N$ iff   $L^T\prec_s N^T$.
\item \label{itm:csu} $L\in \mathtt{Optimal}^{\overline{w}}_\alpha$ iff $L^T\in \mathbf{Optimal}^{\overline{w}}_\alpha$.

\item \label{itm:dsu} $L\in \mathtt{S-Optimal}^{\overline{w}}_\alpha$ iff $L^T\in \mathbf{S-Optimal}^{\overline{w}}_\alpha$.
\end{enumerate}
\end{lemma}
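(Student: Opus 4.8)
The plan is to reduce all four parts to a single structural fact about the correspondence of Definition~\ref{treeres}: the map $v \mapsto h_v$ induces, for each choice cell $L \in \mathtt{Choice}^{\overline{w}}_\alpha$, a \emph{value-preserving} bijection between $L$ and $L^T$. This is exactly what the defining clauses $\mathbf{Value}_{\mathcal{O}}(h_v) = \mathtt{Value}_{\mathcal{O}}(v)$ and $\mathbf{Value}_{\mathcal{S}}(h_v) = \mathtt{Value}_{\mathcal{S}}(v)$ are rigged to provide. First I would make the cell-and-state correspondence explicit. By the construction underlying $\mathcal{F}^T$ (Definition~11 in \cite{abarca2022int}, invoked through Proposition~\ref{avion2res}), the partition $\mathbf{Choice}^m_\alpha$ at the moment $m$ corresponding to $\overline{w}$ is precisely $\left\{L^T ; L \in \mathtt{Choice}^{\overline{w}}_\alpha\right\}$, and the selection-function structure is preserved, so $\mathbf{State}^m_\alpha = \left\{S^T ; S \in \mathtt{State}^{\overline{w}}_\alpha\right\}$ with $L^T \cap S^T = \left\{h_v ; v \in L \cap S\right\}$. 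Together with value preservation, this immediately yields, for the general orderings, $X \leq Y$ iff $X^T \leq Y^T$ and $X \leq_s Y$ iff $X^T \leq_s Y^T$.

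For part~(\ref{itm:asu}) I would unfold the definition of $\preceq$: $L \preceq N$ holds iff $L \cap S \leq N \cap S$ for every $S \in \mathtt{State}^{\overline{w}}_\alpha$. By the cell/state correspondence each such inequality is equivalent to $L^T \cap S^T \leq N^T \cap S^T$, and quantifying over all states gives $L \preceq N$ iff $L^T \preceq N^T$. The strict version follows mechanically, since $L \prec N$ abbreviates $L \preceq N$ together with $N \npreceq L$ and both conjuncts transfer.

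Part~(\ref{itm:bsu}) is the crux and the main obstacle. The subjective ordering quantifies not over a single $\square$-class but over all $v$ with $w \approx_\alpha v$, and it compares epistemic clusters $\sembracki{L}^{\overline{v}}_\alpha$ rather than the cells themselves. The key auxiliary fact I would need is that the tree construction preserves epistemic clusters, i.e. $\bigl(\sembracki{L}^{\overline{v}}_\alpha\bigr)^T = [L^T]^{m'}_\alpha$, where $m'$ is the moment corresponding to $\overline{v}$. This rests on the correspondence between $\approx_\alpha$ and $\sim_\alpha$ built into $\mathcal{F}^T$ (again from \cite{abarca2022int}), together with the fact that indistinguishable histories in the tree track their source worlds. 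Granting this, the same value-preservation argument as in~(\ref{itm:asu}) transfers each clustered inequality $\sembracki{L}^{\overline{v}}_\alpha \cap S \leq_s \sembracki{N}^{\overline{v}}_\alpha \cap S$ to its tree counterpart, and the match between the $w \approx_\alpha v$ quantifier and the $m \sim_\alpha m'$ quantifier closes the equivalence; the strict version follows as before.

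Finally, parts~(\ref{itm:csu}) and~(\ref{itm:dsu}) are immediate corollaries. Since $L \mapsto L^T$ is a bijection from $\mathtt{Choice}^{\overline{w}}_\alpha$ onto $\mathbf{Choice}^m_\alpha$, membership of $L$ in $\mathtt{Optimal}^{\overline{w}}_\alpha$ --- the absence of any $L'$ with $L \prec L'$ --- transfers under~(\ref{itm:asu}) to the absence of any $(L')^T$ with $L^T \prec (L')^T$, i.e. to $L^T \in \mathbf{Optimal}^{\overline{w}}_\alpha$; and identically for $\mathtt{S\textnormal{-}Optimal}$ using~(\ref{itm:bsu}). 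The only genuine labor, then, is assembling the cell/state and epistemic-cluster correspondences from the cited construction; once those are in hand, value preservation makes every equivalence fall out directly.
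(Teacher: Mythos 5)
Your proposal is correct and is essentially the intended argument: the paper itself gives no in-line proof of this lemma, deferring entirely to Proposition~4 of the cited arXiv reference, and that external proof proceeds exactly as you do --- transferring the general orderings via the value-preserving bijection $v \mapsto h_v$, matching choice cells, states, and epistemic clusters under the tree construction, and then deriving the strict orderings and the (S-)optimality claims as formal corollaries. The only substantive point you rightly flag as needing verification, the epistemic-cluster identity $(\sembracki{L}^{\overline{v}}_\alpha)^T = [L^T]^{m'}_\alpha$ and the alignment of the $\approx_\alpha$ and $\sim_\alpha$ quantifiers, is indeed where the real work sits, and it is supplied by the correspondence of Definition~11 in \cite{abarca2022int}; with that in hand your reduction is complete.
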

\begin{proof}
The reader is referred to the proof of Proposition~4 in \url{https://doi.org/10.48550/arXiv.1903.10577} for a proof.
\end{proof}

\begin{proposition}[Truth-preserving correspondence]\label{avion1res}
Let $\mathcal{M}$ be a bi-valued Kripke-\emph{kios}-model, and let $\mathcal{M}^T$ be its associated bi-valued \emph{kiobt}-model. For all $\phi$ of $\mathcal{L}_{\textsf{R}}$ and $w\in W$,  $\mathcal{M} ,w\models\phi$ iff $\mathcal{M} ^T,\left\langle\overline{w},h_w\right\rangle\models\phi$.  
\end{proposition}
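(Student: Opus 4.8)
The plan is to prove the biconditional by structural induction on $\phi$, with the assignment $w\mapsto\langle\overline{w},h_w\rangle$ as the correspondence between worlds and indices, and leaning on the structural facts that the associated frame $\mathcal{F}^T$ (Definition~\ref{treeres}) inherits from $\mathcal{F}$ (which is a genuine bi-valued \emph{kiobt}-frame by Proposition~\ref{avion2res}). Since $M_W,\sqsubset,\mathbf{Choice},\{\sim_\alpha\}_{\alpha\in Ags},\tau$ are taken verbatim from \cite{abarca2022int}, the assignment $v\mapsto h_v$ furnishes, for each $w$, a bijection between the worlds of the $\square$-class $\overline{w}$ and the histories $H_{\overline{w}}$ through the moment $\overline{w}$; moreover this bijection carries the partition $\mathtt{Choice}^{\overline{w}}_\alpha$ onto $\mathbf{Choice}^{\overline{w}}_\alpha$, the relation $\approx_\alpha$ (restricted to the \emph{ex ante} information set) onto $\sim_\alpha$, and the intention relation $R_\alpha^I$ onto the topology $\tau_\alpha$, in the sense made precise there. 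For the base case, the valuation of $\mathcal{M}^T$ is defined (again as in \cite{abarca2022int}) precisely so that $\langle\overline{w},h_w\rangle$ satisfies $p$ iff $w\in\mathcal{V}(p)$, and the Boolean connectives $\neg,\wedge$ are immediate from the induction hypothesis.

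For the purely relational modalities I would dispatch the cases $\Box$, $[\alpha]$ and $K_\alpha$ uniformly: in each, the tree truth condition (Definition~\ref{evares}) quantifies over a set of histories---$H_{\overline{w}}$, the cell $\mathbf{Choice}^{\overline{w}}_\alpha(h_w)$, and the $\sim_\alpha$-accessible indices, respectively---that the bijection identifies with the corresponding set of worlds in the Kripke condition (Definition~\ref{evareskripke})---namely $\overline{w}$, the cell $\mathtt{Choice}^{\overline{w}}_\alpha(w)$, and the $\approx_\alpha$-accessible worlds---so the equivalence drops out of the induction hypothesis applied pointwise. The intention case $I_\alpha$ is slightly more delicate: here I would invoke the correspondence established in \cite{abarca2022int} between the non-empty open sets of $\tau_\alpha^{\langle\overline{w},h_w\rangle}$ and the up-sets $x\uparrow_{R_\alpha^{I+}}$ with $x\in\pi_\alpha^\square[w]$, so that the existence of an open set contained in $\|\phi\|$ on the tree side matches the existence of a suitable $x$ with $x\uparrow_{R_\alpha^{I+}}\subseteq|\phi|$ on the Kripke side, the induction hypothesis serving to identify $\|\phi\|$ with $|\phi|$ across the bijection.

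The deontic operators $\odot_\alpha$ and $\odot^{\mathcal{S}}_\alpha$ are where the real work lies, and here Lemma~\ref{kriptotreeres} does the heavy lifting: its four items guarantee that the objective and subjective dominance orderings, and hence the sets $\mathbf{Optimal}$ and $\mathbf{SOptimal}$, are transported faithfully between $\mathcal{M}$ and $\mathcal{M}^T$ under $L\mapsto L^T$. The remaining task is to reconcile the two superficially different evaluation rules---the tree rule of Definition~\ref{evares}, phrased directly as $\phi$ holding throughout every (subjectively) optimal action, against the dominance-style rule of Definition~\ref{evareskripke}---and then to check that the quantification over worlds inside an action $L$ (resp.\ its epistemic clusters $\sembracki{L}^{\overline{v}}_\alpha$) matches, via the bijection, the quantification over histories in $L^T$ (resp.\ $[L^T]^{m'}_\alpha$), after which the induction hypothesis on $\phi$ closes the argument. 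The main obstacle I anticipate is precisely the subjective case: it ranges over epistemically accessible moments and over epistemic clusters simultaneously, so I would need the cluster correspondence $\sembracki{L}^{\overline{v}}_\alpha\leftrightarrow[L^T]^{m'}_\alpha$ to commute with $\approx_\alpha$ and $\sim_\alpha$ uniformly across all accessible $\overline{v}$, and to confirm that in the finite-choice setting the dominance-based and $\mathbf{SOptimal}$-based formulations genuinely coincide. Granting Lemma~\ref{kriptotreeres}, this reduces to bookkeeping, but it is the step most prone to index errors.
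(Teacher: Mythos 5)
Your proposal follows essentially the same route as the paper's proof, which is an induction on the complexity of $\phi$ that defers the base case, the Boolean cases, and all modal cases except $I_\alpha$ to the analogous argument in the earlier deontic-stit paper (where Lemma~\ref{kriptotreeres} and the world--history bijection do exactly the work you describe for the deontic operators), and defers the $I_\alpha$ case to the corresponding result in \cite{abarca2022int}. Your version is simply a fleshed-out rendering of that same strategy, and your flagged concern about reconciling the $\mathbf{Optimal}$-based tree rule with the dominance-style Kripke rule in the finite-choice setting is a legitimate point that the paper's two-sentence proof leaves implicit.
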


\begin{proof}

We proceed by induction on the complexity of $\phi$. For the base case, the cases of Boolean connectives, and the cases of all modal operators except $I_\alpha$, the proofs are exactly the same as their analogs' in Proposition~4 in \url{https://doi.org/10.48550/arXiv.1903.10577}. For the case of $I_\alpha$, the proof is the same as its analog in  Proposition~3 in \cite{abarca2022int}.  

\end{proof}

Thus, completeness with respect to bi-valued \emph{kiobt}-models is proved with Propositions~\ref{completeres} and \ref{completenessres} below.  

\begin{proposition}[Completeness w.r.t. bi-valued Kripke-\emph{kios}-models] \label{completeres} The proof system ${\Lambda_R}$' is complete with respect to the class of bi-valued Kripke-\emph{kios}-models.
\end{proposition}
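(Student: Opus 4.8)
The plan is to establish completeness through the canonical-model method, adapting the constructions of \cite{abarca2019logic} and \cite{abarca2022int}. I would take the set of worlds $W$ of the canonical model $\mathcal{M}^c$ to be the collection of all maximal $\Lambda_{R}'$-consistent sets, and define the relations in the familiar canonical way: $w R_\square v$ iff $\{\phi; \square\phi\in w\}\subseteq v$; $w\approx_\alpha v$ iff $\{\phi; K_\alpha\phi\in w\}\subseteq v$; and, within each $\square$-class $\overline{w}$, the partition $\mathtt{Choice}_\alpha^{\overline{w}}$ induced by the relation $u R_\alpha^{\overline{w}} v$ iff $\{\phi; [\alpha]\phi\in u\}\subseteq v$. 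The $\mathbf{S5}$ schemata guarantee that $R_\square$, $\approx_\alpha$, and each $R_\alpha^{\overline{w}}$ are equivalence relations. The intention relation $R_\alpha^I$ would be read off from $I_\alpha$ in the standard way for a $\mathbf{KD}$ modality; its seriality, transitivity, and euclideanness follow from the $\mathbf{KD}$ axioms together with the introspection consequences of the interaction schemata.

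The first block of work is to verify that this canonical structure is a \emph{bi-valued} Kripke-\emph{kios}-frame, i.e.\ that it meets every frame condition of Definition~\ref{modelsKripkeres}. Here each condition is forced by a specific axiom: $(SET)$ yields that choice cells sit inside $\square$-classes; $(IA)$ yields independence of agency $\mathtt{(IA)_K}$; $(OAC)$ and $(Unif\mbox{-}H)$ yield $\mathtt{(OAC)_K}$ and $\mathtt{(Unif\mbox{-}H)_K}$ respectively; $(InN)$ gives $R_\alpha^I\subseteq{\approx_\alpha}\circ R_\square$; the finitary consistency-of-intention schema gives the density condition $\mathtt{(Den)_K}$; and $(KI)$ gives the knowledge-of-intention frame condition. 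These verifications are essentially those of \cite{abarca2022int}, transferred to the present signature.

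The heart of the proof---and the step I expect to be the main obstacle---is the treatment of the two deontic operators, since $\odot_\alpha$ and $\odot^{\mathcal{S}}_\alpha$ are non-normal: their truth conditions (Definition~\ref{evareskripke}) quantify over choice cells ordered by dominance relations $\preceq$ and $\preceq_s$ that are themselves derived from the value functions $\mathtt{Value}_\mathcal{O}$ and $\mathtt{Value}_\mathcal{S}$. The strategy, following \cite{abarca2019logic}, is two-staged. First I would define, directly on the finite set $\mathtt{Choice}_\alpha^{\overline{w}}$ of canonical choice cells, candidate orderings whose associated optimal and sub-optimal sets mirror exactly the deontic formulas present in the worlds of $\overline{w}$; the schemata $(A1)$--$(A4)$ and $(Oic)$ constrain the objective ordering, while $(A5)$, $(A6)$, $(SuN)$, $(s.Oic)$, and $(s.Cl)$ constrain the subjective one. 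Second, I would invoke a representation argument---available because the models are finite-choice---to produce real-valued functions $\mathtt{Value}_\mathcal{O}$ and $\mathtt{Value}_\mathcal{S}$ on the worlds that \emph{realize} these orderings through the $\mathtt{State}$-wise comparisons of Definition~\ref{modelsKripkeres}. The decisive point is that, because the frame is bi-valued, the two value functions may be chosen \emph{independently}, so an arbitrary pair of admissible objective and subjective orderings can be realized simultaneously. This is precisely where the elimination of $(ConSO)$ is indispensable: that schema would force $L\preceq_s L'$ to imply $L\preceq L'$, i.e.\ it would couple the two orderings, and no pair of independently chosen value functions could in general realize two coupled orderings---this coupling is exactly the gap that keeps completeness of $\Lambda_R$ open.

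With the value functions in hand, the proof concludes with a Truth Lemma, established by induction on the complexity of $\phi$: $\mathcal{M}^c, w\models\phi$ iff $\phi\in w$. The Boolean cases and the normal-modal cases for $\square$, $[\alpha]$, and $K_\alpha$ are routine via the standard existence lemma; the case of $I_\alpha$ follows the relational treatment of \cite{abarca2022int}; and the cases of $\odot_\alpha$ and $\odot^{\mathcal{S}}_\alpha$ follow immediately from the correspondence, established in the previous step, between membership of deontic formulas in a maximal consistent set and the optimality facts determined by the realized orderings. Completeness then follows in the usual way: any $\Lambda_R'$-consistent formula extends to a maximal consistent set by Lindenbaum's lemma, hence is satisfied at the corresponding world of $\mathcal{M}^c$.
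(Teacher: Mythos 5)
Your overall skeleton---canonical model over maximal $\Lambda_R'$-consistent sets, standard canonical relations for $\square$, $[\alpha]$, $K_\alpha$, $I_\alpha$, frame-condition checks driven by the interaction schemata, then a truth lemma---matches the paper's, and your diagnosis of why $(ConSO)$ must be dropped (it couples $\preceq_s$ to $\preceq$, a coupling that two independently chosen value functions need not respect) is exactly right. The gap is in the step you yourself flag as the heart of the proof. You propose to first extract ``admissible'' orderings on $\mathtt{Choice}_\alpha^{\overline{w}}$ from the deontic formulas present in $\overline{w}$ and then invoke a representation argument producing real-valued $\mathtt{Value}_\mathcal{O}$ and $\mathtt{Value}_\mathcal{S}$ that realize them. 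Two problems. First, the dominance orderings of Definition~\ref{modelsKripkeres} are induced by state-wise comparisons under the very strong relation $X\leq Y$ (\emph{every} value occurring in $X$ below \emph{every} value occurring in $Y$), so the class of realizable orderings is severely constrained---even $L\preceq L$ fails unless $L\cap S$ is value-constant for each state $S$---and you neither specify which orderings you extract nor argue that they are realizable; there is no off-the-shelf representation theorem to invoke here, and producing one would be the entire technical content of the proof. Second, your stated justification, finiteness of choice, is unavailable: the canonical model over all maximal consistent sets need not be finite-choice, which is precisely why the evaluation rules for $\odot_\alpha$ and $\odot^{\mathcal{S}}_\alpha$ on Kripke-\emph{kios}-models take the more involved ``there is a strictly dominating cell all of whose dominators satisfy $\varphi$'' form rather than quantifying over optimal cells.

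The paper does not pass through an intermediate ordering at all. It defines the value functions directly and concretely as $\{0,1\}$-valued indicators: $\mathtt{Value}_\mathcal{O}(w)=1$ iff $w$ contains every $[\alpha]\phi$ such that $\odot_\alpha[\alpha]\phi\in w$, and $\mathtt{Value}_\mathcal{S}(w)=1$ iff $w$ contains every $K_\alpha\phi$ such that $\odot^{\mathcal{S}}_\alpha[\alpha]\phi\in w$. The dominance orderings are then whatever these two functions induce, and the deontic cases of the truth lemma are verified against that specific construction (using $(A1)$--$(A6)$, $(Oic)$, $(SuN)$, $(s.Oic)$, and $(s.Cl)$ to show that membership of $\odot_\alpha\varphi$, resp.\ $\odot^{\mathcal{S}}_\alpha\varphi$, in $w$ coincides with the semantic clause). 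To repair your argument, replace the representation step by an explicit construction of this kind and prove the deontic truth-lemma cases against it.
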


\sloppy
\begin{proof}
Completeness with respect to bi-valued Kripke-\emph{kios}-models is shown via canonical models. To be precise, one defines a structure $\mathcal{M}:=\left\langle W^{\Lambda_R'}, R_\square, \mathtt{Choice}, \left\{\mathtt{\approx}_\alpha\right\}_{\alpha\in Ags}, \left\{R_\alpha^I\right\}_{\alpha\in Ags} \mathtt{Value}_{\mathcal{O}}, \mathtt{Value}_{\mathcal{S}},\mathcal{V} 
\right\rangle$, where $W^{\Lambda_R'}=\left\{w ;w  \mbox{ is a } {\Lambda_R'}\textnormal{-MCS}\right\}$, where $R_\square, \mathtt{Choice}, \left\{\mathtt{\approx}_\alpha\right\}_{\alpha\in Ags}$, $\left\{R_\alpha^I\right\}_{\alpha\in Ags}$, and $\mathcal{V}$  are defined just as in Definition~12 in \cite{abarca2022int}, and where $\mathtt{Value}_\mathcal{O}$ and $\mathtt{Value}_\mathcal{S}$ are defined as follows: for $\alpha\in Ags$ and $w\in W^\Lambda$, one first defines $\Sigma_\alpha^{w}:=\{[\alpha ]\phi ;\odot[\alpha]\phi \in w\}$ and  
   $\Gamma_\alpha^{w}:=\{K_\alpha\phi; \odot_{\mathcal{S}}[\alpha]\phi \in w\}$. Then, taking $\Sigma^{w}=\bigcup_{\alpha\in Ags} \Sigma_\alpha^{w}$ and $\Gamma^{w}= \bigcup_{\alpha\in Ags}$, the deontic functions are given by

\[\begin{array}{ll}
    \mathtt{Value} _\mathcal{O}(w) &=  \left\{\begin{array}{lll} 1 \mbox{ iff }  \Sigma^{w}  \subseteq w,& \\ 0 \mbox{ otherwise}.&\end{array}\right.\\
    \mathtt{Value} _\mathcal{S}(w) &=  \left\{\begin{array}{lll} 1 \mbox{ iff }  \Gamma^{w} \subseteq w,& \\ 0 \mbox{ otherwise}.&\end{array}\right.\\
\end{array}\]

The canonical structure $\mathcal{M}$ is shown to be a bi-valued Kripke-\emph{kios}-model just as in Proposition~4 in \cite{abarca2022int}. Then, the so-called \emph{truth lemma} is shown by merging Lemma~2 in \cite{abarca2022int} and Lemma~4 in \url{https://doi.org/10.48550/arXiv.1903.10577}. This renders completeness with respect to bi-valued Kripke-\emph{kios}-models.
\end{proof}

\begin{proposition} [Completeness w.r.t.  bi-valued \emph{kiobt}-models] \label{completenessres} The proof system ${\Lambda_R}$' is complete with respect to the class of bi-valued \emph{kiobt}-models.
\end{proposition}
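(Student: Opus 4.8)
The plan is to obtain completeness with respect to bi-valued \emph{kiobt}-models as an immediate consequence of the machinery already assembled, namely completeness with respect to the Kripke semantics (Proposition~\ref{completeres}) together with the truth-preserving correspondence between bi-valued Kripke-\emph{kios}-models and their associated bi-valued \emph{kiobt}-models (Propositions~\ref{avion2res} and~\ref{avion1res}). The argument is the familiar transfer-of-completeness pattern, run contrapositively through satisfiability: rather than showing directly that every formula valid on all bi-valued \emph{kiobt}-models is a theorem, I would show that every non-theorem is refuted on some bi-valued \emph{kiobt}-model.

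First I would fix a formula $\phi$ of $\mathcal{L}_{\textsf{R}}$ with $\not\vdash_{{\Lambda_R}'}\phi$; the goal is to produce a bi-valued \emph{kiobt}-model that falsifies $\phi$. Since $\phi$ is not a theorem of ${\Lambda_R}'$, the formula $\lnot\phi$ is ${\Lambda_R}'$-consistent, so by the Kripke completeness established in Proposition~\ref{completeres} there exist a bi-valued Kripke-\emph{kios}-model $\mathcal{M}$ and a world $w$ with $\mathcal{M},w\models \lnot\phi$, equivalently $\mathcal{M},w\not\models\phi$. I would then pass to the associated structure $\mathcal{M}^T$ of Definition~\ref{treeres}: Proposition~\ref{avion2res} certifies that its underlying frame $\mathcal{F}^T$ is a genuine bi-valued \emph{kiobt}-frame, so $\mathcal{M}^T$ is a legitimate bi-valued \emph{kiobt}-model, and the truth-preserving correspondence of Proposition~\ref{avion1res} yields $\mathcal{M}^T,\langle\overline{w},h_w\rangle\not\models\phi$. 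Thus $\phi$ is refutable on a bi-valued \emph{kiobt}-model, which is exactly the contrapositive of the desired completeness statement.

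Because every non-trivial component has already been isolated in the preceding results, there is no genuine obstacle left inside this proof itself; the only thing to check is that the three ingredients compose without a gap. Concretely, I would make sure that the world $w$ witnessing satisfiability in the Kripke model is precisely the world fed into Proposition~\ref{avion1res}, so that the index $\langle\overline{w},h_w\rangle$ is well defined, and that the valuation carried over in forming $\mathcal{M}^T$ is the one under which Proposition~\ref{avion1res} was proved. The conceptually delicate work---the canonical-model construction and truth lemma behind Proposition~\ref{completeres}, and the agreement of the objective and subjective dominance orderings under the tree-unravelling behind Proposition~\ref{avion1res} (via Lemma~\ref{kriptotreeres})---lives entirely in those lemmas, so here it suffices to invoke them.
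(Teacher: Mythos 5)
Your proposal is correct and follows essentially the same route as the paper: both reduce the claim to Proposition~\ref{completeres} plus the truth-preserving correspondence of Proposition~\ref{avion1res} (with Proposition~\ref{avion2res} guaranteeing the associated structure is a genuine bi-valued \emph{kiobt}-model), the only cosmetic difference being that you phrase the transfer contrapositively via refutability of non-theorems while the paper states it directly as satisfiability of consistent formulas.
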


\begin{proof}
Let $\phi$ be a ${\Lambda_R'}$-consistent formula of $\mathcal{L}_{\textsf{R}}$. Proposition~\ref{completeres} implies that there exists a bi-valued Kripke-\emph{kios}-model $\mathcal{M}$ and a world $w$ in its domain such that $\mathcal{M} , w \models \phi$. Proposition~\ref{avion1res} then ensures that the  bi-valued \emph{kiobt}-model $\mathcal{M}^T$ associated with $\mathcal{M}$ is such that $\mathcal{M}^T, \left\langle\overline{w},h_w  \right\rangle\models \phi$. 
\end{proof}

Therefore, Proposition~\ref{soundres} and Proposition~\ref{completenessres} imply that the following result, appearing in the main body of the paper, is true:

\soundcompres*

\end{subappendices}
\end{document}